\documentclass[a4paper,UKenglish,cleveref, autoref, thm-restate, numberwithinsect ]{lipics-v2021}

\title{Out-of-Order Membership in Regular Languages}

\usepackage[bibliography=common]{apxproof}

\author{Antoine Amarilli}{Univ. Lille, INRIA, CNRS, Centrale Lille, UMR 9189
CRIStAL, F-59000 Lille, France}{antoine.a.amarilli@inria.fr}{https://orcid.org/0000-0002-7977-4441}{}
\author{Sebastien Labbe}{Univ. Lille, CNRS, INRIA, Centrale Lille, UMR 9189 CRIStAL, F-59000 Lille, France}{sebastien.labbe@pm.me}{https://orcid.org/0009-0007-1790-7820}{}
\author{Charles Paperman}{Univ. Lille, CNRS, INRIA, Centrale Lille, UMR 9189 CRIStAL, F-59000 Lille, France}{charles.paperman@univ-lille.fr}{https://orcid.org/0000-0002-6658-5238}{ANR-24-CE25-2874}

\authorrunning{A.\ Amarilli, S.\ Labbe, C.\ Paperman} 

\Copyright{A.\ Amarilli, S.\ Labbe, C.\ Paperman} 

\ccsdesc[500]{Theory of computation~Regular languages}

\keywords{Automata, Complexity, Algebra} 

\category{Track B: Automata, Logic, Semantics, and Theory of Programming}

\relatedversion{} 

\nolinenumbers 

\hypersetup{hypertexnames=false}

\EventEditors{Sayan Bhattacharya, Danupon Nanongkai, Michael Benedikt, and Gabriele Puppis}
\EventNoEds{4}
\EventLongTitle{53rd International Colloquium on Automata, Languages, and Programming (ICALP 2026)}
\EventShortTitle{ICALP 2026}
\EventAcronym{ICALP}
\EventYear{2026}
\EventDate{July 7--10, 2026}
\EventLocation{Royal Holloway, University of London, Egham, United Kingdom}
\EventLogo{}
\SeriesVolume{374}
\ArticleNo{171}

\newtheoremrep{theorem}{Theorem}[section]
\newtheoremrep{proposition}[theorem]{Proposition}
\newtheoremrep{claim}[theorem]{Claim}
\newtheoremrep{corollary}[theorem]{Corollary}
\newtheoremrep{lemma}[theorem]{Lemma}
\newtheoremrep{example}[theorem]{Example}
\newtheoremrep{definition}[theorem]{Definition}

\usepackage{amsmath,amssymb,amsthm,mathtools}

\usepackage{stmaryrd}

\date{}

\newcommand{\deff}[1]{\emph{#1}}

\newcommand{\ept}{\, \_ \,}

\newcommand{\FF}{\mathrm{F}}
\newcommand{\LL}{\mathrm{L}}
\newcommand{\FL}{\mathrm{FL}}

\newcommand{\dom}{\mathrm{dom}}

\usepackage{booktabs}

\usepackage{tikz}
\usetikzlibrary{arrows.meta,positioning}
\usetikzlibrary{calc}

\begin{document}

\maketitle

\begin{abstract}
We introduce the task of \emph{out-of-order membership} to a formal language
\(L\), where the letters of a word \(w\) are revealed one by one in an
arbitrary order. The length \(|w|\) is known in advance, but the
content of \(w\) is streamed as pairs \((i, w[i])\), received exactly once for
each position \(i\), in arbitrary order. We study efficient algorithms
for this task when \(L\) is regular, seeking tight complexity bounds as
a function of \(|w|\) for a fixed target language. Most of our results
apply to an algebraically defined variant dubbed \emph{out-of-order
evaluation}: this problem is defined for a
fixed finite monoid or semigroup \(S\), and our goal is to compute
the ordered product of the streamed elements of~\(w\).

We show that, for any fixed regular language or finite semigroup, both problems can be solved
in constant time per streamed symbol and in linear space. However, the
precise space complexity
strongly depends on the algebraic structure of the
target language or evaluation semigroup. Our main contributions are
therefore to show (deterministic) space complexity characterizations,
which we do for out-of-order evaluation of monoids and semigroups.

For monoids, we establish a trichotomy: the space complexity is either
\(\Theta(1)\), \(\Theta(\log n)\), or \(\Theta(n)\), where \(n = |w|\).
More specifically, the problem admits a constant-space solution for commutative monoids,
while all non-commutative monoids require \(\Omega(\log n)\) space. We
further identify a class of monoids admitting an \(O(\log n)\)-space
algorithm, and show that all remaining monoids require \(\Omega(n)\)
space.

For general semigroups, the situation is more intricate. We characterize a
class of semigroups admitting constant-space algorithms for
out-of-order evaluation, and show that semigroups outside this class
require at least \(\Omega(\log n)\) space. At the same time, we exhibit
semigroups for which specialized techniques yield intermediate bounds
such as an \(O(\sqrt{n})\)-space algorithm, suggesting that the landscape may be
richer and less well-behaved than for the monoid setting.
\end{abstract}


\section{Introduction}
\label{sec:introduction}
The class of regular languages can be seen as a robust formalism to define the
properties of words that can be tested efficiently, i.e., using constant time per symbol
and constant memory overall. These bounds can be achieved by feeding the input
word to an automaton: they hold when the word is stored in an array and read from
left-to-right, and also when the word is streamed character by character.

However, it is not always realistic to assume that words can be processed from
left-to-right: in some contexts the word is
revealed character by character in an out-of-order fashion. One example is
distributed processing frameworks (e.g., MapReduce), in the spirit of
out-of-order
data processing~\cite{akidau2015dataflow}: if we want to
determine some property of a sequence of
data items processed concurrently by multiple workers, then
the value of each item will typically arrive as each worker finishes, in an
unpredictable order (see, e.g., \cite{ananthanarayanan2010reining}). Another
example is networking: large data typically travels using packet switching, and
the packets typically arrive in an out-of-order fashion. There, a line of work has investigated
how to efficiently detect patterns in such a stream of out-of-order packets, without
reassembly~\cite{varghese2006detecting}, with algorithms adapted from suffix
trees~\cite{chen2011ac} or automata~\cite{zhang2003space,yu2016o3fa}.

Such practical scenarios lead us to consider the task of
\emph{out-of-order membership} to a regular language, which we think is also
a natural theoretical question. In this problem, we fix a regular language~$L$,
we consider a word $w$ whose length
$|w|$ is known in advance, and we are streamed the contents of~$w$ in an
out-of-order fashion, i.e., as pairs $(i,w[i])$ giving us the letter at
position~$i$ of~$w$. Each position is streamed exactly once, and there is no
guarantee on the order (i.e., it is adversarial). The algorithm must 
correctly determine, at the end of the stream, whether $w \in L$ or $w \notin
L$, while
minimizing the space usage and the time spent processing each
character. In particular, in contrast with left-to-right streaming, we will see
that it may no longer be possible to decide membership with a constant state
complexity, i.e., the memory usage may depend on the length of the word.
To our knowledge this natural question
of the complexity of out-of-order membership has not yet been studied, though it
relates to many lines of work on membership variants,
presented next.

\subparagraph*{Related work.}
One first related topic is the \emph{query complexity} of regular
languages~$L$, i.e.,
the asymptotic number of (interactively chosen) characters that need to be
inspected in an input word $w$ to decide whether $w \in L$.
This problem requires $\Omega(n)$ on words of length~$n$ for almost all regular
languages~$L$ (and in $O(1)$ for a narrow class); but it admits a more interesting
classification in the quantum setting~\cite{aaronson2019quantum}. This question has been further
extended to \emph{property
testing}~\cite{alon2001regular} where we only need to
discriminate between words in~$L$ and words that are far from $L$
according to a certain distance. 
In this context, a trichotomy on the query complexity of
property testing has been achieved~\cite{bathie2025trichotomy}.
However, all of these problems differ from
out-of-order membership: they assume that we can freely choose positions
to inspect, whereas in out-of-order membership the streaming order is arbitrary. They also
focus on the number of queries needed, whereas with out-of-order membership we
receive all positions and focus on the time and space complexity.

Another related line of work is \emph{sliding window membership}, where we want
to test membership to a fixed language~$L$ for a sliding window over a stream.
Works on this problem have investigated
the complexity of various languages~$L$, first in terms of time per
character~\cite{ganardi2022low}, and then in terms of memory usage, with
again a recent trichotomy on space complexity for regular languages
in several settings~\cite{Ganardi2024Sliding}.
Other works extend sliding windows by considering possible arrival of
out-of-order elements,
in particular for
aggregation~\cite{traub2018scotty,tangwongsan2023out,bou20241}.
However, one important general difference of the sliding window setting
is that we must 
maintain membership to~$L$ for the various states of the sliding window across the
stream, with letters that leave
the sliding window when they expire. By contrast, in our context, we only test
membership of a single word, and revealed positions never expire; but the arrival
order is arbitrary and not sequential. Hence, for instance, the language $a \Sigma^*$ admits
constant space out-of-order membership (by storing the first character)
but requires linear space in the sliding
window setting~\cite{Ganardi2024Sliding}, and conversely one can show that out-of-order membership to the
language $\Sigma^* aa \Sigma^*$ requires linear space (see \cref{mon_notricom_lin})
but it can be tested in
logarithmic space in the sliding window setting~\cite{Ganardi2024Sliding}.

Another related setting is \emph{dynamic membership}, where we must test 
whether a word $w$ belongs to a regular language~$L$ and maintain this information under
updates to~$w$ (not just the push-right and pop-left updates of
sliding windows). In particular, a conditional trichotomy
has been shown under substitution updates~\cite{Amarilli2021Dynamic}, i.e., where we receive updates $(a,i)$ that change the $i$-th letter of
the word to the letter~$a$. Out-of-order membership is analogous to a special 
case of this setting where each position is set exactly once and never changed
afterwards.
However, \cite{Amarilli2021Dynamic} always assumes to
have linear memory to store the current state of~$w$ -- their trichotomy
is on time complexity and not space complexity.
In fact, unlike their setting,
we will show that all regular languages admit an out-of-order membership
algorithm in $O(1)$
time per character.

Last, the setting closest to ours is the \emph{communication
complexity} of regular language membership, studied in particular by Tesson and
Thérien~\cite{tesson} and then by Ada~\cite{ada2010non}. In this setting, we wish
again to determine whether a word $w$ belongs to a regular language~$L$, but
$w$ is
split across multiple players and we wish to minimize how much information
they need to exchange to determine whether $w \in L$.
Many communication complexity models are quite different from out-of-order
membership, e.g., the
multiparty setting of~\cite{RaymondTessonTherienCC}, along with
randomized settings~\cite{tesson} and nondeterministic
settings~\cite{ada2010non}. The most relevant model 
is the \emph{two-player setting} of~\cite{tesson}: the word is split
between Alice (for odd positions) and Bob (for even positions), and they can
exchange information in arbitrarily many rounds to determine membership. 
Communication lower bounds in this model imply space
complexity lower bounds in our setting: if we are first streamed the even positions,
then the information that we remember gives a one-way protocol from Alice to
Bob.
%
However, upper bounds in their setting
do not necessarily translate to ours, because our algorithms do not obey a fixed
partition.
As we show in the special case of languages featuring a neutral
letter (i.e., the case of monoids), the complexity of out-of-order membership happens to
coincide with the communication complexity for another model studied in~\cite{tesson}, namely, the
\emph{simultaneous setting} in which Alice and Bob must each send information to a referee. This is surprising because, as far as we understand, there is no
reduction from their model to ours or vice-versa. Nevertheless, as the
tractability boundary is the same, our algebraic proofs in this specific setting
relate to theirs, as we will
explain. In any case, all results in~\cite{tesson} crucially assume the presence of a
neutral letter (i.e., some of the letters of Alice and/or Bob may be
$\epsilon$):
the issue is further discussed in \cite[Section~5.2]{tesson1998algebraic}, and 
it has significant impact in our setting as well, as we will see when presenting
our contributions below.

\subparagraph*{Contributions.}
In this paper, we introduce the out-of-order membership problem for regular
languages, and give first results on its complexity, including a complete
space trichotomy classification for the case of monoids and a
characterization of constant space for semigroups. 
We present these contributions in more detail below along with the structure of the paper.

We introduce formally the problem and our model in \cref{sec:problem}, and
give some first complexity results. Specifically, in the unit-cost RAM model
with logarithmic word size,
we show that every fixed regular language admits an
out-of-order membership algorithm with constant time per streamed character and
linear memory usage (\cref{word_upper_n}). For this reason we only focus on space complexity in the
remainder of the paper. We close the section by adapting the classical
technique of \emph{fooling sets} for unconditional lower bounds, which we use
throughout the paper; we exemplify it to show a linear space lower bound for
$\Sigma^* aa \Sigma^*$ (\cref{mon_notricom_lin}).
By contrast, we show that better bounds are possible for some languages,
namely, commutative languages admit constant space out-of-order membership
(\cref{com_const}).

As classifying regular languages 
turns out to be challenging, we focus in the rest of the paper on
algebraically defined variants of the problem.
In \cref{sec:monoid}, we introduce the simplest such problem:
the \emph{out-of-order evaluation problem} for a monoid~$M$, where we are streamed a
word $w$ of monoid elements in an
out-of-order fashion and must compute the product of~$w$ in~$M$, while minimizing the space usage.
The point of out-of-order evaluation is that it can provide algorithms for
out-of-order membership to languages (via their syntactic monoid),
and it intuitively amounts to assuming that languages have been closed under the
addition of a neutral letter, simplifying the classification.
Further, we show that the complexity regimes of out-of-order monoid evaluation
form varieties of monoids, so the problem is amenable to algebraic tools.
We accordingly study out-of-order monoid evaluation and
prove (\cref{thm:monoid}) 
that monoids admit three space complexity regimes, with matching upper and
lower bounds: constant space for
commutative monoids, logarithmic space for monoids in a class dubbed $\mathbf{FL} \lor
\mathbf{Com}$, and linear space for all other monoids. The variety $\mathbf{FL} \lor
\mathbf{Com}$ intuitively extends commutative monoids with the ability to
remember a constant number of leftmost and rightmost occurrences of each
element. It turns out that the join variety $\mathbf{FL} \lor \mathbf{Com}$ is the same 
as the variety $\mathbf{W}$ introduced in~\cite{tesson} (with a less informative characterization).
Thus, the complexity of deterministic \emph{simultaneous} communication
complexity according to~\cite{tesson}, in the specific case of monoids, happens to
coincide with the space complexity of our out-of-order evaluation problem
-- even though neither problem seems to capture the other.
Our classification is summarized in the first column of \cref{all_table}, with
examples (as syntactic monoids of languages) in \cref{all_ex_table}.

We then turn in \cref{sec:semigroup} to the more general problem of out-of-order
evaluation for semigroups, i.e., monoids but without the requirement of having a neutral element.
The problem turns out to be much more challenging in the semigroup setting, and
thus we 
only focus on the constant space
regime, which we are able to characterize (\cref{low_log_sem}). Namely, we show a
constant space algorithm for semigroups in $\mathbf{Li} \lor \mathbf{Com}$,
i.e., intuitively extending commutative semigroups with the ability to test
constant-length prefixes and suffixes of the word (see \cref{all_ex_table} for
an example).
We show a logarithmic lower bound for all other semigroups.



In \cref{sec:horror}, we show that classifying the higher complexity regimes appears much more 
intricate. 
For convenience we present results on the out-of-order membership problem, but
the difficulties that we highlight 
already appear for 
out-of-order
semigroup evaluation (for space complexities beyond the constant regime).
We first
give 
ad hoc
logarithmic upper bounds for $a^* b^* a^*$ 
(\cref{prp:aba})
and for $a^* b^* a^* b^* a^*$ 
(\cref{prp:ababa})
using
completely different techniques.
We then show 
(\cref{prp:ababab})
an unexpected sublinear $O(\sqrt{n})$ space
upper bound for 
$a^* b^* a^* b^* a^*
b^*$ for which we have no better complexity upper bound and no superlogarithmic
lower bound.
See \cref{all_table} for the overall complexity bounds achieved for
out-of-order evaluation, and
\cref{all_ex_table} for examples.

We conclude and propose further research directions in \cref{sec:conc}. For lack
of space, most detailed proofs are deferred to 
the appendix.

\begin{table}[t]
	\centering
\begin{tabular}{cccc}
\toprule
\textbf{Space} & \textbf{Monoids} & \textbf{Semigroups} \\ 
\midrule
$O(1)$ & $\mathbf{Com}$ & $\mathbf{Li} \lor \mathbf{Com}$ \\ 
$O(\log n)$ & $\mathbf{FL} \lor \mathbf{Com}$ 
& ? \\ 
$O(n)$ & $\text{All monoids}$ 
& $\text{All semigroups}$ \\ 
\bottomrule
\end{tabular}
\caption{Out-of-order evaluation space complexity. The results of the first column are in \cref{thm:monoid}, the
$O(1)$ result for semigroups is in \cref{low_log_sem}.
	}
\label{all_table}
\end{table}


\begin{table}[t]
	\centering
        \begin{tabular}{c@{~~~}c@{~~~}c@{~~~}c}
\toprule
\textbf{Space} & \textbf{Monoids} & \textbf{Semigroups} & \textbf{Languages} \\
\midrule
	$\Theta(1)$ & $M((aa)^*)$ Ex.~\ref{ex:parity} & $S(a\Sigma^*b)$
        Prp.~\ref{ex:asigmab} & $(ab)^*$ Ex.~\ref{ex:abstar}\\
	$\Theta(\log n)$ & $M(ab)$ Ex.~\ref{ex:ab} & $S(a^*bc^*)$
		$S(a^*bba^*)$ Ex.~\ref{ex:sg_lower_bound}-\ref{ex:sg_lower_bound2} & $a^*b^*a^*$ Prp.~\ref{prp:aba} \\
	$O(\sqrt{n})$ &  &  & $a^*b^*a^*b^*a^*b^*$ Prp.~\ref{prp:ababa} \\ 
	$\Theta(n)$ & $M(a^* b b a^*)$ Ex.~\ref{ex:abba}& $S((ab)^*)$
        Prp.~\ref{ex:sg_lower_bound_ab}& $\Sigma^* aa \Sigma^*$ 
        Prp.~\ref{mon_notricom_lin}\\
        \bottomrule
\end{tabular}
\caption{Out-of-order evaluation/membership problem space complexity examples.
  The notations $M(\cdot)$ and $S(\cdot)$ respectively denote the syntactic
  monoid and syntactic semigroup of the indicated languages.}
\label{all_ex_table}
\end{table}

\section{Problem Statement and First Results}
\label{sec:problem}
In this section, we formally define the out-of-order membership problem, and
give a constant-time and linear-space upper bound for this problem on arbitrary
fixed regular languages (\cref{word_upper_n}).
Then we formally define fooling sets and give a
linear-space lower bound for the language $\Sigma^* aa \Sigma^*$
(\cref{mon_notricom_lin}).
We close by an easy constant-space upper bound for commutative languages 
(\cref{com_const}) which will be reused throughout the paper.

\subparagraph*{Problem statement.}
Let $\Sigma$ be an alphabet, let $n>0$ be a word length,
let $w = w[1] \cdots w[n]$ be a word over $\Sigma$ of length $n$, 
and let $\pi_n$ be an \emph{$n$-permutation},
i.e., a permutation of $\{1, \ldots, n\}$.
The \deff{out-of-order stream} of the word $w$ defined by~$\pi_n$
is the finite sequence $R$ of length $n$ whose successive elements are $(w[\pi(i)],
\pi(i), n)$ for $i = 1, \ldots, n$.
Intuitively, we receive the letters of~$w$ together with their position in the
order $\pi(1), \ldots, \pi(n)$. Note that the length $n$ of the word is provided
as the third component of each element: this is because we assume 
throughout the work that the length $n$ 
is always known and we do not want to
account for it when measuring space complexity.

We only consider regular languages in this paper, and we omit their definition.
A letter $a \in \Sigma$ is \emph{neutral} for a language $L \subseteq \Sigma^*$
if for any $s, t \in \Sigma^*$ we have $s a t \in L$ if and only if $s t \in L$.
For a fixed regular language $L$ over~$\Sigma$,
the \deff{out-of-order membership problem} for~$L$ is the following task:
letting $w$ be a word of length~$n$ over~$\Sigma$,
letting $\pi_n$ be an~$n$-permutation, and
letting $R$ be the out-of-order stream on~$w$ defined by~$\pi_n$,
we receive the elements of~$R$ one after the other and must decide at the end
whether $w \in L$ or not.

Note that the standard membership test to a fixed language~$L$ for a word $w$ streamed from
left-to-right corresponds to the case where $\pi_n$ is 
the
identity permutation. For left-to-right membership testing, we can simply use an
automaton for~$L$ and achieve constant space and constant
time per character; but for out-of-order membership the challenge is that we must support arbitrary
permutations. Further note that we
can always solve the problem by storing the letters in an array $T$ as they
arrive,
i.e., whenever we receive $(a, i, n)$ we set $T[i] \coloneq a$ and we test at
the end whether the word stored in~$T$ belongs to~$L$.
This gives a naive algorithm which uses
linear space and runs in constant worst-case complexity per streamed character
except that it spends linear time after the last character. Let us now present a more
efficient algorithm that avoids this drawback.



\subparagraph*{Time complexity upper bound.}
We first study efficient algorithms for out-of-order membership to fixed regular
languages in terms of time complexity per streamed character.
For this result on time complexity, and only for this result, we need to be
precise about the computational model that we assume; subsequent results will
only focus on space complexity and will apply no matter which computations are
allowed.
We work in the standard
unit-cost RAM model with logarithmic word
size~\cite{FredmanW90,grandjean2023which}.
Memory consists of cells of size $w = \Theta(\log n)$ bits, and reading, writing, or performing standard operations on these cells takes constant time. 
Our set of allowed operations includes addition, subtraction, increment,
comparisons, integer division with remainder, shifts, bitwise Boolean operations,
and computing the most significant bit.
We will also use the standard lazy array evaluation
technique to assume when allocating arrays that they can be initialized in
constant time~\cite{grandjean2023which}.
The algorithm receives the successive triples $(a, i, n)$ of the
out-of-order stream, with the values being stored in three $\Theta(\log n)$-bit cells. Here, $a$ uses constant bits (as the alphabet is fixed), while $i$ and $n$ are integers represented on $\Theta(\log n)$ bits.


In the computational model defined above, we can show that all regular languages admit an algorithm
with $O(1)$ complexity per letter and a linear space usage of $O(n)$ bits
(i.e., amounting to $O(n/\log n)$ memory cells).

\begin{theoremrep}
\label{word_upper_n}
For any fixed regular language $L$,
there is an algorithm for
out-of-order membership to~$L$ with \(O(1)\) time per streamed character and
a space complexity of \(O(n)\) bits.
\end{theoremrep}

\begin{proofsketch}
  Let $M$ be the syntactic monoid of~$L$ (see
        \cref{sec:monoid} for the formal definitions). We divide the input word into blocks
  of length $\lceil \log n\rceil$, so that each block can be represented in a
  constant number of RAM cells. The algorithm has two layers. The macro layer
  treats complete blocks as atomic symbols: for every maximal contiguous
  interval of complete blocks, it stores the interval length at both endpoints
  and the product in~$M$ of the block values at the left endpoint. The micro
  layer maintains analogous information inside each block, using one
  bit vector to record which positions of the block have already been streamed
  and a packed vector of monoid elements to store the values in~$M$ of maximal
  contiguous streamed intervals (which correspond to maximal contiguous
  intervals of 1's in the bit vector), each value being stored at the left
  endpoint of its interval.

  When a letter is streamed, we first update its block in the micro layer.
  To do this, we must compute the maximal contiguous interval of 1's to
  which the streamed letter belongs. This is accomplished by creating a
  singleton interval of 1's in the bit vector, and then trying to merge it with the interval
  of 1's immediately to its right and the interval immediately to its left.
  If there is an interval to the right, then merging with this interval is
  immediate because the monoid value of the interval is stored at its left
  endpoint (i.e., immediately to the right of the new position).
  For the merge with the left interval, we must find the left endpoint of that
  interval, which we do in constant time by shifting
  the bit vector of the block and applying the most-significant-bit operation.
  Hence the micro update takes constant time, noting that multiplication in~$M$ is constant-time since~$L$ is fixed.

  Now, if this update completes the block, we insert it into the macro
  structure. To do this, we first create the block as a singleton
  block interval in the macro layer, whose monoid value is the value of the
  newly computed block. Then we compute the maximal interval of complete blocks
  to which the block belongs, again by attempting to merge its singleton
  interval with the 
  neighboring interval on the right and the neighboring interval on the left
  if they exist. These macro merges take
  constant time because the interval lengths and monoid values are stored at
  interval endpoints.

  At the end of the stream, all blocks form a single macro interval. Its
  stored monoid value is $\mu(w)$, from which we can deduce whether $w\in L$.

  The overall space bound is $O(n)$ bits: the macro arrays have
  $O(n/\log n)$ entries of one RAM cell each, while the packed micro bit vectors
  and monoid vectors use $O(\log n)$ bits per block.
  We use lazy array evaluation~\cite{grandjean2023which} to ensure that our data
  structures are initially zero without the need for running a preprocessing
  phase.
\end{proofsketch}

\begin{proof}
	Let $L$ be the target regular language.
        Let $M$ be its syntactic monoid and
        $\mu\colon \Sigma^*\to M$ be its syntactic morphism (see
        \cref{sec:monoid} in the main text for the formal definitions).
	We assume throughout the proof that $|M|$ is much smaller than $n$;
        indeed, for all small values of $n$ we can simply use a brute force algorithm
	without changing the asymptotic complexity.

	We partition the input word $w$
	of length~$n$ into blocks of length $b\coloneq \lceil\log n\rceil$, except that the last
	block may be shorter. This yields $B=\lceil n/b\rceil$
	blocks.
	The value of $b$ is computed in constant time using the
	most-significant-bit operation available in our RAM model, and $B$ is then
	computed in constant time by integer division with remainder, which is
        also available in our RAM model. Each position $1 \leq i \leq n$ of the
        input word corresponds to a block and to an offset
	within this block. Namely, position $i$ corresponds to the block
	$\lfloor (i-1)/b \rfloor+1$ and to the offset $(i-1) \bmod b + 1$.
	The algorithm has two layers.

        \medskip
	\textbf{Macro layer.}
	Blocks are treated as atomic symbols. A block is called \emph{complete} once all
	positions that correspond to this block have been streamed.
        The macro layer uses two arrays
	of length $B$:
	\begin{itemize}
		\item a table $T_{\text{size}}$ of size $B$, indexed by block indexes
			$1,\ldots,B$, whose entries are RAM cells storing interval lengths;
	  \item a table $T_\mu$ of size $B$, also indexed by block indexes
		  $1,\ldots,B$, whose entries are RAM cells storing elements of~$M$ (which fit since $|M|$ is fixed and smaller than $n$).
	\end{itemize}
	These tables maintain the monoid value and the interval size for each 
	maximal contiguous interval of complete blocks. More
        precisely, the invariant
	is the following: for every maximal contiguous interval $[i,j]$ of complete
	blocks, where $1 \leq i \leq j \leq B$ are block indexes, all blocks $\beta_k$ with
	$i\leq k \leq j$ are complete, while $\beta_{i-1}$ and $\beta_{j+1}$ are
	either not complete or out of bounds. We store in $T_{\text{size}}$ at
	positions $i$ and $j$ the size $j-i+1$ of the interval
	and in $T_\mu$ at position $i$ 
	the monoid value of the factor covered by this interval.
        More precisely, if
	$x_k$ is the image under~$\mu$ of the word
	formed by the completed block $\beta_k$, i.e. $x_k = \mu(w[(k-1)b+1]
        \cdots w[(k-1)b+b_k])$,
        then $T_\mu[i]$ stores the product $x_i\cdots x_j$ in~$M$.
	Note that the invariant only imposes requirements on the
        values of $T_\mu$ and $T_{\text{size}}$ stored at endpoints of maximal contiguous
          intervals of complete blocks; there is no requirement imposed on
          the other cells as they will never be queried. In particular, we do not
          need to erase values of $T_\mu$ and $T_{\text{size}}$ when positions stop being endpoints of
          a maximal contiguous interval of complete blocks: as intervals only grow,
          we know that such positions will never be queried again as endpoints.

        \medskip

	\textbf{Micro layer.}
	Inside each block, we maintain information analogous to the macro layer,
	but instead of maintaining the interval sizes we simply maintain a bitvector
	representing revealed letters.
	Let $1 \leq j \leq B$ be a block index.
        We write $b_j$ for the
	length of block~$\beta_j$,
        so $1 \leq b_j \leq b$.
        More precisely, for each block
	$\beta_j$ we store:
	\begin{itemize}
	  \item a bit vector $v_j$ of length $b_j$ indicating which positions of $\beta_j$
		  have already been streamed; in this bit vector, the bit at
                  position $1$ is the leftmost one, which corresponds to the least significant bit;
	  \item a vector $m_j$ whose entries are indexed by the offsets
	  $1,\ldots,b_j$ of $\beta_j$. For every maximal contiguous streamed
	  interval $[p,q]$ inside $\beta_j$, the entry $m_j[p]$ stores the image
	  under~$\mu$ of the factor revealed at offsets $p,\ldots,q$.
	\end{itemize}
	The entries of $m_j$ are elements of the fixed monoid $M$, so they can be stored
	using a constant number of bits. Thus, the whole vector $m_j$ is represented
	using a constant number of RAM cells, and its entries can be accessed and
	updated in constant time.


	The bit vectors $v_j$ for $1 \leq j \leq B$ of the micro
        layer need to be initially 0, to denote that no position
        of the word has been streamed initially.
        To ensure this, we use
        the \emph{lazy array evaluation technique} described in
	\cite[Section 2.5]{grandjean2023which}, which lets us use the bit
        vectors as if they all had been initialized to zero in constant time
        (in particular avoiding the need for a preprocessing phase).

        \medskip
	\textbf{Handling streamed letters (overall).}
	When a triple $(a,r,n)$ with $1 \leq r \leq n$ is streamed, 
	we first determine the block $\beta_j$ containing position~$r$ and the offset
	$p$ of this position within block~$\beta_j$, 
	i.e., $j \coloneq \lfloor (r-1)/b \rfloor + 1$ and $p \coloneq (r-1) \bmod b + 1$,
        by performing integer division with remainder by~$b$ in constant
        time.
	We also compute $b_j$, which is $b$ if $j < B$ and $n \bmod b$ otherwise.
        We then first handle this update
        in~$\beta_j$ at the micro layer, as we now explain, before handling an
        update at the macro layer if this completes the last position
        of~$\beta_j$.

        \medskip
	\textbf{Handling streamed letters (micro layer).}
	What we do here intuitively
        amounts to what we would do to achieve a space complexity of $O(b_j)$
        memory cells on a block of length $b_j$, except that to achieve a
        space complexity of $O(b_j)$ bits we have
        packed all
        information about the block $\beta_j$ in the constant number of cells
        used by $\beta_j$. The main challenge is to argue that we can indeed
        handle the streamed letter in constant time using the operations allowed
        by the RAM model, in particular paying attention to the fact that we
        must find the endpoints of intervals by direct examination of the bit
        vector $v_j$ (i.e., we do not have an analogue of the table
        $T_{\text{size}}$ of the macro layer, intuitively because we do not have
        the space to store it).

        More precisely, to handle a streamed letter in $\beta_j$, we first initialize a singleton interval at position~$p$
        of~$\beta_j$,
        by setting the $p$-th bit of~$v_j$ to be~$1$ and setting the $p$-th
        element of $m_j$ to be $\mu(a)$ for $a$ the received character. 
	This can
        be performed in constant time in our RAM model: indeed, there is only a
        constant number of possible images of $\mu$, and the values being
        manipulated can be constructed in constant time via bit shifts.

        We then attempt to merge this singleton interval with an interval starting
        at the right-neighboring position (if it exists). More precisely, if $p = b_j$ then we
        do nothing because $p$ is the rightmost position of the block.
        Otherwise, we check the $(p+1)$-th bit of~$v_j$. If it is~$0$, then we do nothing
        because there is no interval covering that position. Otherwise, we merge
        the newly created singleton interval with the interval that exists at
        its right. To do this, we set the $p$-th
        value of~$m_j$ to be the product in~$M$ of $\mu(a)$ and of the element at the
        $(p+1)$-th position of~$m_j$. 
	All of this can still be performed in constant time in
        our RAM model. In particular, the product in~$M$ can be done in constant
        time because $M$ is of constant size.

        Finally, we try to merge the resulting interval with an interval ending at
        the left-neighboring position (if it exists).
        To do this, if $p = 1$ we do
        nothing because $p$ is the leftmost position of the block. Otherwise, we
        check the $(p-1)$-th bit of~$v_j$, and do nothing if it is~$0$ because
        there is no interval that ends at position $p-1$. If the $(p-1)$-th bit
        of~$v_j$ is~$1$,
        we need to merge with the left interval.
        This is more challenging than the merge with a right-neighboring interval described
        in the previous paragraph, because now 
        we must efficiently retrieve the image of the left interval by $\mu$
        from $m_j$, and for this we need to locate the
        position of the left endpoint $p'$ of the interval. Once we have
        determined $p'$, we will set the $p'$-th
        position of $m_j$ by multiplying in~$M$ the current value (namely, the value of
        the left interval) with the value of the $p$-th position of~$m_j$ (namely,
        the value of the current interval, which is either the singleton
        interval at $p$ or the result of merging that interval with its right
        neighboring interval).
        It only remains to explain how to find $p'$ in constant time.

        We first spell out the definition of $p'$, before explaining how this
        definition can be implemented in constant time in our model.
        If there is no position $1 \leq q < p$ whose bit in $v_j$ is~$0$, then
        the left interval starts at $p' \coloneq 1$ because there is no gap
        before position~$p$. Otherwise, let $1 \leq q < p-1$ 
      be the greatest (i.e., rightmost)
        position whose bit in $v_j$ is~$0$; the left interval then starts at $p'
        \coloneq q+1$.
        To implement this computation efficiently, we shift $v_j$ to the right
        by $b_j-p$ positions, so that only the $p$ bits strictly to the left of
        position~$p$ remain. We then bitwise negate this shifted word, 
	and apply the most-significant-bit operation. 
        If the resulting word is zero, then
        there is no such position~$q$ and $p'=1$; otherwise, the
        most-significant-bit operation returns the index $t$ of the rightmost
        1 (i.e., the rightmost 0 before bitwise negation), then undoing the shift gives
	the corresponding original offset~$q = t-(b_j-p)$,
        and we set $p'=q+1$.
        All these operations are constant-time
        operations in our RAM model.

        \medskip
	\textbf{Handling streamed letters (macro layer).}
        We have explained how the micro structure is updated in constant time,
        and we must now explain how to do the same with the macro structure. We
        only need to update the macro structure when the updated block $\beta_j$
        becomes complete, i.e., when all positions belonging to $\beta_j$ have
        their bit set to~$1$ in $v_j$. This is checked in constant time 
	by taking the bitwise xor of $v_j$ with $2^{b_j}-1$ 
        and checking whether the result is $0$.
	Note that to compute $2^{b_j}-1$ without overflow we can compute
        it as $2^{b_j - 1}+(2^{b_j - 1}-1)$, which can be computed in constant
        time in our RAM model using constants, shifts, predecessor, and
        addition.
        Then, the micro layer ensures that the value $x$ of $m_j$ at position
        $0$ is the value in~$M$ of the contiguous factor of~$w$ covered by the
        block~$\beta_j$.

        We now insert $x$ as a singleton interval of the macro layer by setting
        $T_\mu[j] \coloneq x$ and $T_{\text{size}}[j] \coloneq 1$. We must then
        attempt to merge this singleton block with the macro intervals that may exist
        immediately to the right and immediately to the left. This is similar to
        the micro structure but it is simpler because we use (and update) the table
        $T_{\text{size}}$ which avoids the need for bit-vector endpoint search;
        we give the details below.

        During these two merge attempts, we keep in a temporary variable the monoid
        value of the interval currently containing block~$j$. Initially, this
        value is $x$ and the current interval has length~$\ell=1$. 

        We first try to merge with an interval immediately to the right, if one
        exists. If $j=B$, or if $\beta_{j+1}$ is not complete (checked as
        previously described by taking xor with $2^{b_{j+1}}-1$ and testing that
	the result is nonzero, using the same technique to compute this constant without overflow), 
        then there is no such interval and the current monoid value and length are unchanged.
        Otherwise, this right interval starts at $j+1$, so its monoid value is
        stored in $T_\mu[j+1]$ and its length is stored in
        $T_{\text{size}}[j+1]$. We multiply the current monoid value, which is
        initially $x$,
        with the right value $T_\mu[j+1]$, and the result becomes the current monoid value of the
        interval starting at~$j$. We also add the length of the right interval
        to the current length~$\ell$. Last, we update $T_{\text{size}}$ at the two
        endpoints of the merged interval. These endpoints are the index $j$ of
        the newly inserted
        block and the previous right endpoint of the right interval.

        We then try to merge with the interval immediately to the left. If
        $j=0$, or if $\beta_{j-1}$ is not complete (checked as previously
        described by taking xor with $2^{b_{j-1}}-1$ and testing that the result
	is nonzero, using the same technique to compute it without overflow),
        then there is no
        such interval, so we store the current monoid value in $T_\mu[j]$.
        Otherwise, the length stored in $T_{\text{size}}[j-1]$ allows us to
        compute where the left interval starts: letting
        $s \coloneq T_{\text{size}}[j-1]$, then this left endpoint is $i
        \coloneq j-s$. We multiply
        the monoid value stored in $T_\mu[i]$ with the current monoid value, and
        store the result in $T_\mu[i]$. We also update $T_{\text{size}}$ at the
        two endpoints of the merged interval to store the length $s+\ell$ of the
        merged interval. These endpoints are $i$ and the right endpoint
        $j+\ell-1$ of the current interval.

        \medskip
        \textbf{Space usage.}
        The macro layer stores two arrays of length $B$. Each entry fits in one
        RAM cell, so this uses $O(B\log n)=O(n)$ bits.
        The micro bit vectors use
        $Bb=O(n)$ bits in total. The vectors $m_j$ also use $O(Bb)=O(n)$ bits,
        because each entry is an element of the fixed monoid~$M$ and therefore
        has constant size. Hence the total space usage is $O(n)$ bits.

        \medskip
        \textbf{End of the stream.}
        When the stream is finished, all blocks are complete, and the macro
        level contains a single interval, so that $T_\mu[0]$ must contain the
        monoid element achieved by that interval, i.e., by the entire word~$w$.
        From this, we can determine whether $w \in L$, which concludes the proof.
\end{proof}

In light of \cref{word_upper_n}, in the remainder of the paper we focus on space
complexity. We only study deterministic space complexity.
We always account for space complexity in terms of the number of bits used (not
memory cells), and we only consider the space that must be memorized between any
two elements of the out-of-order stream; we do not consider the computation time and
interim memory needed when processing an element.

\subparagraph*{Fooling sets for lower bounds.}
We now introduce \emph{fooling sets} as a general technique to
prove unconditional lower bounds, and showcase fooling sets to show
a linear space lower bound on the specific language $\Sigma^* aa \Sigma^*$.
Our fooling sets are an out-of-order analogue of standard techniques 
(see, e.g., \cite[Definition~2]{Hospodar2018} for a similar notion)
used for
the state complexity of regular languages when reading words from left to right,
e.g., to show that any deterministic automaton for the language $L_n = \Sigma^*
a \Sigma^n$ must have at least $2^{n+1}$ states
(see
\cite[Example~2.13]{hopcroft2006introduction} which proceeds along similar
lines).

Let $\ept$ be a fresh placeholder symbol, and 
write $\bar{X} \coloneq X \cup \{\ept\}$ for a set extended by~$\ept$.
A \textit{partial word} over an alphabet $\Sigma$ 
is a word $w$ over the alphabet $\bar{\Sigma}$.
The \emph{domain} of~$w$ is the subset $\dom(w)$ of the positions where $w$ does not
carry a placeholder, i.e., $\dom(w) \coloneq \{i \in \{1, \ldots, |w|\} \mid
w[i] \in \Sigma\}$.
Two partial words $w_1,w_2$ are \emph{homogeneous} if $|w_1| = |w_2|$ and $\dom(w_1) =
\dom(w_2)$, and a set $S$ of partial words is \emph{homogeneous} if all
its partial words are pairwise homogeneous.
Now, two partial words $w_1,w_2$ are 
\textit{compatible} if they have the same length and if $\dom(w_1) \cap
\dom(w_2) = \emptyset$, i.e.,
for all $1 \leq i \leq |w_1|$ at least one of $w_1$ and $w_2$ has a placeholder
at position~$i$.
Their \textit{composition} $w_1 \circ w_2$ is then the partial word defined by $(w_1
\circ w_2)[i] \coloneq w_1[i]$ if $i \in \dom(w_1)$, 
and $(w_1
\circ w_2)[i] \coloneq w_2[i]$ otherwise. We further call two compatible words $w_1$ and $w_2$
\textit{complementary} if $\dom(w_1) \cup \dom(w_2) = \{1, \ldots, n\}$,
equivalently, $w_1 \circ w_2$
is a word of~$\Sigma^*$ without placeholders.
Given a language $L \subseteq \Sigma^*$, a word length $n$,
and two homogeneous partial words $w_1,w_2 \in \bar{\Sigma}^n$,
we call \emph{complementary witness} of $w_1,w_2$ for~$L$ a partial word $v \in \bar{\Sigma}^*$
complementary to both $w_1$ and $w_2$,
such that precisely one of $w_1 \circ v$ and $w_2 \circ v$ belongs to~$L$.

For $L$ a language over~$\Sigma$ and $n,m \in \mathbb{N}$,
an \textit{$(m,n)$-fooling set for~$L$}
is a homogeneous set $W = \{w_0,\dots,w_{m-1}\}$ of $m$ partial words $w_i \in \bar{\Sigma}^n$ 
such that any two 
distinct partial words $w_i,w_j \in W$
have a complementary witness for~$L$.

%

\begin{lemmarep}[Fooling sets lower bound]
\label{thm:fooling_lower_bound}
Let $L$ be a fixed language over alphabet $\Sigma$, and let
$f\colon \mathbb{N}\to\mathbb{N}$ be a non-decreasing function. 
If for every integer $N \in \mathbb{N}$ there exists $n > N$ such that there is
an $(f(n),n)$-fooling set for~$L$, then the out-of-order membership problem
to~$L$ has $\Omega(\log f(n))$ space complexity.
\end{lemmarep}

The proof of \cref{thm:fooling_lower_bound} is standard and deferred to the
appendix
: we simply argue that the
algorithm must have enough memory to distinguish each partial word among those
of the
fooling set.
Note that, in the result statement above and everywhere in the paper, we write
$g = \Omega(f)$ for two functions $f, g \colon \mathbb{N}_{>0} \to
\mathbb{N}_{>0}$ to mean that $\limsup_{n \rightarrow \infty} g(n) / f(n) >
0$. This is the 
$\Omega$ notation of Hardy and Littlewood~\cite{HardyLittlewood1914}
and not the one by Knuth~\cite{Knuth1976BigOmicron} (which is defined as meaning
$f = O(g)$).
For instance, letting $g$ be the function defined by
$g(n) \coloneq n$ for odd $n$ and $g(n) \coloneq 1$ otherwise, we have $g =
\Omega(n)$. This is important, e.g., to show lower bounds on out-of-order
membership for languages that are hard for words of odd length and trivial for
words of even length.


\begin{proof}
  Fix $L$ as in the statement. It suffices to show that for each
  $N\in \mathbb{N}$ there is $n > N$ such that, on words of length~$n$, any
  out-of-order membership algorithm requires $\log f(n)$ memory bits. The
  lemma statement guarantees us the existence of an $n > N$ for which there is
  an $(f(n),n)$-fooling set
	$W_n = \{w_0,\dots,w_{f(n)-1}\}$ for $L$. Order arbitrarily the common
        domain of the partial words of~$W_n$, e.g., order it increasingly as $p_1
        < \cdots < p_\ell$; and further let $q_1 < \cdots < q_r$ be the elements
        of the complement of the domain; note that $\ell + r = n$.

	Let $A$ be any (deterministic) algorithm solving the out-of-order
        membership problem for $L$ on words of length~$n$, and let us show that
        $A$ needs $\log f(n)$ memory bits.
        For each word $w_i \in W_n$, let $s_i$ be the state of the memory after
        the partial out-of-order stream $P_i \coloneq (w_i[p_1], p_1, n), \ldots,
        (w_i[p_\ell], p_\ell, n)$.
        It suffices to show that the $s_i$ are pairwise distinct, because
        this implies that we need $\geq \log |W_n|$ bits to write them, i.e., we
        use at least $\log f(n)$ memory bits as required.
        
        To see why, let us proceed by contradiction and assume that we have
        two integers $i \neq j$ for which $s_i = s_j$. 
        Let $v_{ij}$ be the complementary witness of~$w_i$ and~$w_j$, and
        consider the continuation of the stream
        $Q = (v_{ij}[q_1], q_1, n), \ldots,
        (v_{ij}[q_r], q_r, n)$. We immediately see that the concatenated
        streams $P_i Q$ and $P_j Q$ are respectively out-of-order streams for
        $w_i \circ v_{ij}$ and $w_j \circ v_{ij}$, both of which follow the
        permutation whose images are first $p_1, \ldots, p_\ell$ and then
        $q_1, \ldots, q_r$. Now, the definition of complementary
        witnesses ensures that exactly one of $w_i \circ v_{ij}$ and $w_j \circ
        v_{ij}$ belongs to~$L$, so $A$ must reply differently on $P_i Q$
        and on $P_j Q$. But we assumed that $A$ was in the same state $s_i =
        s_j$ after $P_i$ and $P_j$, so it is still in the same state after
        reading the continuation $Q$. This is a contradiction and
        concludes the proof.
\end{proof}

Let us showcase \cref{thm:fooling_lower_bound} and prove that some languages
require linear space for out-of-order membership:

\begin{proposition}\label{mon_notricom_lin}
  The out-of-order membership problem for the language $\Sigma^* aa \Sigma^*$
  on alphabet $\Sigma = \{a,b\}$
  requires linear space.
\end{proposition}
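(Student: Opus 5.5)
We will apply \cref{thm:fooling_lower_bound} with $f(n) = 2^{\Omega(n)}$, so it suffices to exhibit, for every large enough $n$, an exponential-size fooling set for $L = \Sigma^* aa \Sigma^*$ on words of length~$n$. The natural choice is to let the common domain be the odd positions, since two letters of the same parity can never be adjacent. This makes each pair of adjacent positions consist of exactly one position from the fooling partial word and one from the witness.

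\textbf{Construction.} Take $n = 2k+1$ and let the domain be $\{1,3,\ldots,2k+1\}$. For every $x \in \{a,b\}^{k+1}$, let $w_x$ be the partial word with $w_x[2i+1] = x_i$ for $0 \le i \le k$ and $\ept$ at even positions. This gives a homogeneous set of $2^{k+1}$ partial words. Given $x \neq y$, pick an index $i$ with $x_i \neq y_i$, say $x_i = a$ and $y_i = b$. The witness $v$ fills every even position with $b$, except one even position adjacent to $2i+1$ (namely $2i$ if $i \ge 1$, otherwise $2i+2$), which receives $a$.

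\textbf{Verification.} In $w_x \circ v$, the $a$ at position $2i+1$ sits next to the $a$ placed by $v$, so $w_x \circ v \in L$. In $w_y \circ v$, every factor of two consecutive letters contains one even position. All even positions hold $b$ except the single special one, whose neighbours are $2i+1$ (holding $b$ in $y$) and one other odd position. Here the main obstacle appears: that other odd neighbour might hold $a$ in $y$, giving an accidental $aa$.

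\textbf{Fixing the obstacle.} The easiest repair is to restrict the fooling set so that odd positions near each other never both carry $a$. Concretely, fix the odd positions $\equiv 3 \pmod 4$ to $b$ in every $w_x$, and let only the positions $4j+1$ vary freely. These free positions are pairwise non-adjacent and are separated by a forced $b$. Now the special even position chosen adjacent to $4j+1$ has as its other neighbour an odd position $\equiv 3 \pmod 4$, which holds $b$. Hence $w_y \circ v$ contains no $aa$, while $w_x \circ v$ does.

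This still yields $2^{\Theta(n)}$ partial words, so $\log f(n) = \Theta(n)$. It works for all large $n$, odd or even, by letting a trailing position be a placeholder filled with $b$ by the witness. \cref{thm:fooling_lower_bound} then gives the $\Omega(n)$ lower bound.
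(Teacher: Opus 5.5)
Your proof is correct and follows essentially the paper's argument. Both apply \cref{thm:fooling_lower_bound} to $2^{\Theta(n)}$ homogeneous partial words whose free positions are isolated by forced $b$'s, with a witness that is all $b$ except for a single $a$ adjacent to a distinguishing position; the paper uses the period-3 pattern $(b\,(a+b)\,\ept)^n$ instead of your period-4 pattern on odd positions, which only changes the constant in the exponent.
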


\begin{proof}
        For each length $n$, we define a $(2^n,3n)$-fooling set $W_n$ for $\Sigma^* aa
        \Sigma^*$  formed of the $2^n$ partial words defined by the
        regular expression $(b (a+b) \ept)^n$. Note that they are all
        homogeneous and have length $3n$. Let $w_i, w_j$ be two distinct partial words
        of~$W_n$. There must be a position $0 \leq p < n$ such that $w_i[3p+2]
        \neq w_j[3p+2]$. Note that this means one is $a$ and the other is~$b$;
        without loss of generality up to exchanging $w_i$ and $w_j$ we assume
        that $w_i[3p+2] = a$ and $w_j[3p+2] = b$. Now, we let $v_{ij} \coloneq
        (\ept \ept b)^{p-1} (\ept \ept a) (\ept \ept b)^{n-p}$. We see that
        $v_{ij}$ is complementary to both $w_i$ and $w_j$. Further, by
        construction $w_i \circ v_{ij}$ belongs to $\Sigma^* aa
        \Sigma^*$ because there are two contiguous $a$'s at positions $3p+2$
        and $3p+3$. By contrast, $w_j \circ v_{ij}$ does not belong to
        $\Sigma^* aa \Sigma^*$. This establishes that $W_n$ is indeed a
        $(2^n,3n)$-fooling set for $\Sigma^* aa \Sigma^*$ and concludes the
        proof.
\end{proof}

We will use fooling sets
to show all the lower
bounds in this paper; in particular these bounds are
all unconditional. Note that this
also implies that all our lower bounds already hold
for streaming orders that are chosen in advance, i.e., when we show a lower
bound on a language for words of length~$n$ then we show the existence of an
$n$-permutation
such that the space lower bound applies even on algorithms that work
only for that specific permutation.  We also note that lower bounds shown with
this method can be understood as a lower bound on fixed-partition communication
complexity for a one-way protocol, where Alice has access to the letters on the
domain of the fooling set and Bob has access to the rest of the word.

\subparagraph*{Constant space upper bound for commutative languages.}
We complement the previous result by noticing that the linear space complexity
lower bound does not apply to all languages: we can sometimes
solve out-of-order membership with better space complexity. 
We will show the easy fact that, for \emph{commutative} regular languages,
out-of-order membership can be solved with constant space. Here, a language
is \emph{commutative} if, for every word $w$, letting $n \coloneq |w|$, for
every $n$-permutation $\pi$, we have $w \in L$ if and only if $w[\pi(1)] \cdots
w[\pi(n)] \in L$. Then:


\begin{lemma}\label{com_const}
	For any fixed commutative regular language $L$,
there is an algorithm for
out-of-order membership to~$L$ with 
\(O(1)\) space complexity.
\end{lemma}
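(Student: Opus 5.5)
The plan is to run a deterministic finite automaton for $L$ on the letters in the order they arrive, ignoring their positions. Fix a complete DFA $\mathcal{A} = (Q, \Sigma, q_0, \delta, F)$ recognizing $L$; since $L$ is fixed, $|Q|$ is a constant. The algorithm keeps a single state $q \in Q$, initialized to $q_0$. On receiving a triple $(a, i, n)$, it discards $i$ and $n$ and updates $q \coloneq \delta(q, a)$. At the end of the stream it accepts if and only if $q \in F$. Between two stream elements it stores only $q$, which needs $\lceil \log |Q| \rceil = O(1)$ bits. This matches the convention that we only count memory kept between stream elements.

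For correctness, let the stream be defined by the $n$-permutation $\pi$. After the whole stream, $q$ equals $\delta^*(q_0, w[\pi(1)] \cdots w[\pi(n)])$, because the letters are fed to $\mathcal{A}$ in exactly the arrival order. So the algorithm accepts if and only if $w[\pi(1)] \cdots w[\pi(n)] \in L$. By the definition of commutativity, applied to the word $w$ and the permutation $\pi$, this holds if and only if $w \in L$, which is what we must decide.

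No step here is a real obstacle. The one point to check is that the definition of commutativity quantifies over every $n$-permutation, which makes the arrival order irrelevant. An alternative, more algebraic route gives the same bound: the syntactic monoid $M$ of $L$ is commutative, so we can maintain the product in $M$ of the received letters, and a fixed element of $M$ also fits in $O(1)$ bits.
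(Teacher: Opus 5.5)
Your proposal is correct and is essentially the paper's own proof. Both feed each arriving letter to a fixed DFA for $L$ and store only the current state. Correctness then follows because the DFA has read a permutation of $w$, and $L$ is commutative.
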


\begin{proof}
  Fix a deterministic automaton $A$ that accepts $L$. Now, whenever we receive
  some tuple $(a,i,n)$, we feed $a$ to~$A$. This takes constant time, and the memory of the algorithm stores
  the automaton state, which has constant size because $L$ is fixed. At the end of the stream, we accept
  if $A$ accepts. It is immediate that $A$
  was fed a permutation of the word~$w$: as $L$ is commutative, the
  algorithm is correct, which concludes.
%
%
\end{proof}

The condition of being commutative is sufficient but is not necessary to have
constant-space complexity, as illustrated below:
\begin{example}
\label{ex:abstar}
   There is an algorithm with $O(1)$ space complexity for out-of-order
   membership to the language $(ab)^*$. Indeed, we simply need to check that 
   all streamed $a$'s are at an odd position, 
	that all streamed $b$'s are at an even position, and that the word
        length is even.
\end{example}

In summary,
we have shown that out-of-order membership to fixed regular languages can
always be solved in constant time per streamed character and linear memory,
and this memory bound is optimal for some languages (e.g., $\Sigma^* aa
\Sigma^*$) but suboptimal for others (e.g., commutative languages or $(ab)^*$
for which the
bound is $O(1)$). Thus, our goal is to classify regular languages according
to the space complexity in this sense. 
Towards this, we will present algebraic variants of
the problem on which classifications are easier to show.

\section{Out-of-Order Monoid Evaluation}
\label{sec:monoid}
In this section we focus on \emph{monoids}, i.e., finite sets equipped with an
associative composition law and a neutral element. We consider the
\emph{out-of-order evaluation problem} for monoids, which can be seen as a
special case of out-of-order membership for regular languages.
In this section, we first define this problem, and then
show that tractable monoids for this problem
are closed under the monoid variety operators. We then show a classification
(\cref{thm:monoid}) of the
possible space complexity regimes.

\subparagraph*{Out-of-order evaluation.}
All monoids in this paper are implicitly finite.
Let $M$ be a fixed monoid, let $n>0$ be the word length,
and let $w = w[1] \cdots w[n]$ be a word over $M$ of length $n$; its
\emph{product} is the value $w[1] \cdot \cdots \cdot w[n]$ of~$M$ obtained
according to the composition law of~$M$.
We define \emph{out-of-order streams}
like in the previous section. The
\emph{out-of-order evaluation problem} for~$M$ is the following task:
letting $w$ be a word of length~$n$, letting $\pi_n$ be an~$n$-permutation, and
letting $R$ be the out-of-order stream on~$w$ defined by~$\pi_n$,
we receive the elements of~$R$ one after the other and must compute at the
end the product of~$w$ in~$M$.

We note that the out-of-order evaluation problem for a monoid $M$ can be
expressed in terms of out-of-order membership problems. Indeed, for any $x \in
M$, the language $L_x$ over $M$ defined as the words of $M^*$ having product~$x$
is always a regular language. 
Likewise, the out-of-order membership problem for a language $L$ reduces to the
out-of-order evaluation problem for the \emph{syntactic monoid} of~$L$, i.e.,
the finite monoid obtained by quotienting $\Sigma^*$ by the \emph{syntactic
congruence} equivalence relation $\sim_L$ defined by~$L$. Intuitively, $\sim_L$
equates words whose behavior relative to~$L$ is the same, and formally we have $u \sim_L v$ for two
words $u,v \in \Sigma^*$ whenever for any $s,t\in\Sigma^*$ we have $sut \in L$
if and only if $svt \in L$. It is well-known that for a regular language $L$ the
syntactic congruence $\sim_L$ has indeed a finite number of equivalence classes,
and in fact this property precisely characterizes the regular languages;
see, e.g., \cite{straubing2021varieties} for more details.

Thus, algorithms for out-of-order evaluation for the syntactic monoid $M$ of a
language~$L$ imply an algorithm for out-of-order membership to~$L$. However,
there are cases where out-of-order membership for~$L$ is easier than the same
problem for~$M$, e.g., because $L$ may not feature a neutral
letter but $M$ always has a neutral element. For this reason, 
classifying the complexity of
out-of-order evaluation for monoids is a coarser task
than classifying the complexity of out-of-order membership for regular
languages, and so it serves as a useful first step.

%

To start our study of out-of-order evaluation for monoids, let us observe that the complexity classes of
monoids for out-of-order evaluation are closed under the monoid variety operators.

\subparagraph*{Closure properties.}
The \emph{direct product} of two monoids $M_1$ and $M_2$ is the monoid $(M_1
\times M_2, \cdot)$ with the binary operator defined as
$(x_1, x_2) \cdot (x_1', x_2') = (x_1 \cdot x_1', x_2 \cdot x_2')$ and identity
$(e_1, e_2)$ for $e_1$ and $e_2$ the respective identities of~$M_1$ and~$M_2$.
A \emph{sub-monoid} of a monoid $M$ with neutral element~$e$
is a subset $M'\subseteq M$ of~$M$ which is closed
under the composition law of~$M$ and which is also a monoid with neutral
element~$e$.
A \emph{congruence} over a monoid~$M$ is an equivalence relation $\sim$ over $M$
which satisfies the following requirement: whenever $s \sim t$ and $s' \sim t'$
for $s, t, s', t' \in M$ then we have $ss' \sim tt'$.
A \emph{quotient monoid} $M/\sim$ is the quotient of a monoid $M$ by a
congruence $\sim$ over~$M$, i.e., it consists of the equivalence classes of~$M$
under~$\sim$, its composition law is lifted to these classes, and its neutral
element is the equivalence class of the neutral element of~$M$. Note that,
thanks to the definition of a congruence, when applying the composition law the
results do not depend on
the choice of representatives.

%
A class of monoids is a \emph{pseudo-variety}, only called \emph{variety}
throughout the paper,
if it is closed under direct products, taking sub-monoids, and taking monoid quotients.
Let $\mathbf V$ and $\mathbf W$ be two varieties of monoids.
Their \emph{join}, denoted $\mathbf V \lor \mathbf W$, is the smallest variety
containing both the monoids of~$\mathbf V$ and those of~$\mathbf W$.

We show that, for the problems that we study, each complexity class forms a
variety:

\begin{propositionrep}
\label{closure_mono}
For any function $f\colon \mathbb{N}\to\mathbb{N}$,
	the set of monoids admitting an $O(f(n))$ space algorithm for
        out-of-order evaluation forms a monoid variety, i.e., 
        it is closed under direct product, sub-monoid, and monoid quotient.
\end{propositionrep}

\begin{proofsketch}
  Any data structure for the out-of-order evaluation problem for a
  monoid~$M$ can in particular solve it for any submonoid, and it can also solve
  it for any quotient (simply by picking arbitrary representatives and returning the equivalence class of the result).
  Further, 
  for any two monoids $M_1$ and $M_2$, we can solve out-of-order evaluation for
  $M_1 \times M_2$ by simultaneously using one data structure for $M_1$ and one
  for~$M_2$.
\end{proofsketch}

\begin{proof}
	Fix $f$ as in the lemma statement. We show each closure property in
        turn.

        Let $M$ be a monoid that admits an $O(f(n))$ space algorithm $A$ for
        out-of-order monoid evaluation.
	We first observe that running the exact same algorithm $A$ solves the 
	out-of-order monoid evaluation problem for any sub-monoid $M'$ of~$M$, i.e.,
        if we operate under the assumption that all letters of the input are
        in~$M'$, then the result of~$A$ will be an element of~$M'$ which is the
        correct result.

        Now, take a congruence $\cong$ over~$M$, and consider $M / \cong$. If
        the letters of the input are in $M/\cong$, i.e., they are equivalence
        classes of~$\cong$, then we lift them to letters of~$M$ by picking any
        representative in~$M$ of the equivalence class. The result of~$A$  will
        then be an element of~$M$ whose equivalence class is the correct result
        in~$M / \cong$. This gives an algorithm for out-of-order evaluation of $M /
        \cong$ with same space usage as~$A$.

        Last, let $M_1$ and $M_2$ be two monoids that admit an $O(f(n))$ space
        complexity algorithms for out-of-order monoid evaluation, written $A_1$
        and $A_2$ respectively.
	Let $R$ be an out-of-order stream of length $n \in \mathbf{N}$ for the
        out-of-order monoid evaluation problem of $M_1 \times M_2$.
	When an element $(s_1 \times s_2, i, n)$ is received, we stream $(s_1,i,n)$ to $A_1$ and $(s_2,i,n)$ to $A_2$.
	At the end of the stream, letting $r_1$ and $r_2$ be the results from
        $A_1$ and $A_2$, then the desired answer is simply $(r_1, r_2)$.
	The total space required is $O(f(n) + f(n)) = O(f(n))$.
\end{proof}

\subparagraph*{Space complexity of out-of-order monoid evaluation.}
Our goal in this section is to classify all monoids $M$ according to the space needed by
an algorithm to solve out-of-order evaluation for~$M$. Our results will imply
the existence of precisely three complexity regimes (constant, logarithmic, and linear),
and characterize the corresponding monoids.

Let us first introduce the varieties that constitute the different regimes. The
first one is $\mathbf{Com}$, the variety of \emph{commutative monoids}. We
can characterize it as those monoids $M$ satisfying the equation $xy = yx$ for
all $x, y \in M$, i.e., we write $\mathbf{Com} = \llbracket xy = yx \rrbracket$.
In fact, by Reiterman's theorem~\cite{reiterman1982birkhoff}, we know that varieties of monoids
always admit such equational descriptions (see~\cite{straubing2021varieties}
for more details), so we will define 
our other monoid varieties using equations in this fashion.
Monoids from the variety $\mathbf{Com}$ admit a constant-space algorithm
for the out-of-order evaluation problem, obtained by
immediately lifting \cref{com_const} to monoids.
We will show that this in fact characterizes all constant-space monoids, with all other monoids having a logarithmic lower bound.

For the logarithmic space regime, we will characterize it as the join of
$\mathbf{Com}$ together with another variety, called $\mathbf{FL}$ (for
``first-last''). To give an equation for $\mathbf{FL}$, we need to define the
\emph{idempotent power} of a monoid~$M$ as
the smallest non-negative integer $\omega$ 
such that
$x^\omega = x^{2\omega}$ for all $x \in M$.
Then we can define $\mathbf{FL} = \llbracket (xy)^\omega st (xz)^\omega = (xy)^\omega sxt
(xz)^\omega \rrbracket$. One example of a monoid in $\mathbf{FL}$ is 
the syntactic monoid of the language
$\Sigma^*a\Sigma^*b\Sigma^*$
over the alphabet $\{a,b,e\}$, intuitively because membership can be decided by
remembering the position of the leftmost $a$ and of the rightmost~$b$.
 We will show that monoids in
$\mathbf{FL}$ admit a logarithmic-space algorithm for out-of-order evaluation,
which will imply the same for $\mathbf{FL} \lor \mathbf{Com}$ thanks to
\cref{closure_mono}.
It turns out that $\mathbf{FL} \lor \mathbf{Com}$ 
precisely coincides with the class $\mathbf{W}$
in \cite{tesson} that forms the tractability boundary of a similar but seemingly
incomparable problem, namely, the monoids with logarithmic simultaneous deterministic communication
complexity.

We will last characterize the linear space regime by showing,
using an equational characterization of $\mathbf{FL} \lor \mathbf{Com}$,
that monoids outside of this class require
linear space.

All told, in this section, we show the following characterization, summarized in
the first column of \cref{all_table} (with all upper bounds matched by tight lower
bounds) and examples (given as syntactic monoids of languages) in the first row
of \cref{all_ex_table}:

\begin{theorem}
  \label{thm:monoid}
  Let $M$ be any fixed monoid. Then the space complexity of out-of-order monoid
  evaluation for~$M$ is as follows:
  \begin{itemize}
    \item If $M \in \mathbf{Com}$, then it is in~$O(1)$
    \item If $M \in (\mathbf{FL} \lor \mathbf{Com}) \setminus \mathbf{Com}$,
      then it is in $\Theta(\log n)$
    \item Otherwise, it is in $\Theta(n)$.
  \end{itemize}
\end{theorem}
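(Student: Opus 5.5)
The proof splits into upper bounds (constant for $\mathbf{Com}$, logarithmic for $\mathbf{FL}\lor\mathbf{Com}$, linear always) and matching lower bounds (logarithmic for non-commutative monoids, linear outside $\mathbf{FL}\lor\mathbf{Com}$).

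\textbf{Upper bounds.} The $O(1)$ bound for $M\in\mathbf{Com}$ follows from \cref{com_const}: we keep the running product of the streamed elements in an arbitrary order, which is correct by commutativity. The $O(n)$ bound for every monoid follows from \cref{word_upper_n} (or from the naive array algorithm). For $O(\log n)$ on $\mathbf{FL}\lor\mathbf{Com}$, \cref{closure_mono} lets us treat only $M\in\mathbf{FL}$, since the algorithms for $\mathbf{FL}$ and for $\mathbf{Com}$ then combine.

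For $M\in\mathbf{FL}$ the algorithm is as follows. For every element $m\in M$ we store the positions of its $k$ leftmost and $k$ rightmost occurrences streamed so far, for a constant $k$ depending on $|M|$ (e.g. $k=|M|\cdot\omega$). We also keep a commutative summary: the set of elements occurring, and counts up to threshold $\omega$. This is $O(\log n)$ bits. At the end, we must show that the product is determined by this data. I would proceed through the equation $(xy)^\omega st(xz)^\omega=(xy)^\omega sxt(xz)^\omega$, which says that once a sufficiently "saturated" prefix and suffix context is present, a letter $x$ already seen early and late may be freely inserted or deleted in the middle. So every occurrence of $x$ that is neither among the $k$ first nor among the $k$ last occurrences of $x$ can be deleted without changing the product. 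The word reduces to the subsequence of retained occurrences, which we know exactly with their relative order, and we multiply it out.

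The delicate point is justifying this deletion, which requires exhibiting idempotent factors $(xy)^\omega$ to the left and right of the middle occurrence. I would obtain these by Ramsey/pumping arguments on the prefix before the $k$-th occurrence of $x$, using that $\mathbf{FL}$ is aperiodic-like enough. This is where I expect the most technical work; possibly a cleaner route is to first prove $\mathbf{FL}$ is generated by syntactic monoids of languages of the form "first/last occurrences pattern" and handle those directly.

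\textbf{Lower bounds.} Both go through \cref{thm:fooling_lower_bound}, using the neutral element $1$ as padding; this padding is what makes monoid fooling sets easy.

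For the logarithmic bound, take $x,y$ with $xy\ne yx$. Over length $n$, the partial words put $x$ at position $i$ and $1$ elsewhere, with one fixed position left as the placeholder domain complement chosen appropriately. Better, use partial words with domain all positions except $n/2$-many, and $x$ at one of $n/2$ chosen positions. A witness places $y$ at a position between two candidate positions of $x$, so the products $xy$ and $yx$ differ. This gives $n$ pairwise distinguishable words and hence $\Omega(\log n)$.

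For the linear bound, $M\notin\mathbf{FL}\lor\mathbf{Com}$. I would first derive an equational or structural description of $\mathbf{FL}\lor\mathbf{Com}$, presumably via the characterization of $\mathbf{W}$ from \cite{tesson}. From a violation I would extract elements that embed a pattern like the one in \cref{mon_notricom_lin}: blocks of the shape $u\,c_i\,v\,\_$ where each $c_i\in\{p,q\}$ is chosen freely, and a witness filler in one block distinguishes $p$ from $q$ there while fillers elsewhere are neutral. This yields a $(2^{\Omega(n)},n)$-fooling set. I expect this step — obtaining a usable equational characterization of the join and turning its failure into a uniform block gadget — to be the main obstacle, more so than the upper bound.
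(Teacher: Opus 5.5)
Your upper bounds and your $\Omega(\log n)$ bound follow the paper. The deletion step is \cref{clm:usepump}. It needs only pigeonhole on the $k+1$ prefixes ending just before successive occurrences of $x$: when two of them have the same value, the factor $xs'$ between them can be replaced by $(xs')^\omega$. No Ramsey or aperiodicity argument is needed, and the extra commutative summary is unnecessary. Your second logarithmic construction is the paper's, provided the domain and its complement alternate.

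The genuine gap is the $\Omega(n)$ bound for $M\notin\mathbf{FL}\lor\mathbf{Com}$, which you leave as an open obstacle. First, you need an identity that every such $M$ provably violates. The paper uses $(xa)^\omega sxtu(xb)^\omega=(xa)^\omega stxu(xb)^\omega$ (\cref{fleq}). The direction the lower bound needs, namely that every monoid satisfying this identity lies in the join, is the nontrivial one. The paper derives from the identity the equations of $\mathbf{W}$ from \cite{tesson} and invokes their theorem that such a monoid is refined by a $RED_{t,p}$ congruence. It then writes that congruence as the intersection of the $\FL^t$ congruence (whose quotient is in $\mathbf{FL}$ by \cref{lem:fltofl}) with a modular counting congruence. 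So $M$ is a quotient of a product of an $\mathbf{FL}$ monoid and a commutative monoid. Without such a characterization there is nothing to extract a gadget from.

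Second, the gadget you sketch does not work as stated. Take blocks $u\,c_i\,v\,\ept$ with $c_i\in\{p,q\}$, and neutral fillers outside block $\iota$. Distinguishing at $\iota$ forces $upv\neq uqv$. But then every other block where $\alpha$ and $\beta$ also differ contributes different factors to $w_\alpha\circ v_{\alpha\beta}$ and $w_\beta\circ v_{\alpha\beta}$, and nothing prevents these differences from cancelling. \Cref{mon_notricom_lin} gets away with this only because $\Sigma^*aa\Sigma^*$ is a local condition.

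The paper's blocks instead contain the same letters in two arrangements, $\ept x\ept e\ept$ or $\ept e\ept x\ept$, so a bit only decides on which side of a witness slot the $x$ sits. The witness does the following:
\begin{itemize}
\item it fills blocks left of $\iota$ with $e\ept e\ept a$ and blocks right of $\iota$ with $e\ept e\ept b$, so each evaluates to $xa$ or $xb$ whatever its bit;
\item it puts $s\ept t\ept u$ in block $\iota$;
\item it adds single end elements $(xa)^{n\omega-\iota+1}$ and $(xb)^{n\omega-n+\iota}$.
\end{itemize}
The two completions then evaluate exactly to $(xa)^\omega sxtu(xb)^\omega$ and $(xa)^\omega stxu(xb)^\omega$. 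In this construction the fillers must be non-neutral: they are what turns the $x$ of each other block into a factor of the two idempotent contexts under which the violation occurs.
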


We prove this result in the rest of the section.
Recall that the first point follows by the immediate analogue of
\cref{com_const} for monoids, for instance:

\begin{example}
   \label{ex:parity}
   The syntactic monoid of the language \((aa)^*\),
   which is the group \(\mathbb{Z}/2\mathbb{Z}\), is commutative and
   therefore admits a constant-space upper bound.
\end{example}

We give first the logarithmic lower bound for
non-commutative monoids (for the lower bound of the second point), then the
logarithmic upper bound (for the upper bound of the second point), and then the linear
lower bound (for the lower bound of the third point). The upper bound of the
third point is immediate by the monoid analogue of \cref{word_upper_n}.

\subparagraph*{Logarithmic lower bound.}
For our two lower bounds in this section, we immediately lift the definitions of
\emph{fooling sets} to the out-of-order monoid evaluation problem in the
expected way: the words are defined over the alphabet $\bar{M}$ featuring
the monoid elements and the placeholder, and the only new stipulation is that
any two words $w_i \neq w_j$ of the fooling set have a complementary witness
$v_{ij}$ that ensures that $w_i \circ v_{ij}$ and $w_j \circ v_{ij}$
evaluate to different monoid elements (instead of saying that precisely one
belongs to the target language).

We then show the first lower bound, establishing that non-commutative monoids need at least
logarithmic space complexity:

\begin{lemmarep}\label{mon_log_lower}
For any fixed non-commutative monoid $M \notin \mathbf{Com}$,
its out-of-order evaluation problem has $\Omega(\log n)$ space complexity.
\end{lemmarep}

\begin{proofsketch}
        Let $x, y \in M$ be two elements such that $x y \neq yx$, and let $e$ be
        the neutral element.
	We use \cref{thm:fooling_lower_bound} 
        (lifted to the setting of monoids)
        and define for each $n > 0$ the
        fooling set $W_n$ formed of the words
        $w_i \coloneq (e\ept)^{i-1}\, (x\ept)\, (e\ept)^{n-i}$
        for each $1 \leq i \leq n$. The complementary
        witnesses for $1 \leq i < j \leq n$ are defined as: 
        $v_{ij} \coloneq (\ept e)^{i-1} \, (\ept y) \, (\ept e)^{n-i}$,
        ensuring that $w_i \circ
        v_{ij}$ evaluates to $xy$ and $w_j \circ v_{ij}$ to~$yx$.
\end{proofsketch}

\begin{proof}
	Since $M$ is non-commutative, pick $x,y \in M$ such that $xy \neq yx$,
        and denote by~$e$ the neutral element of~$M$.
	For any $n>0$, let us define a $(n,2n)$-fooling set $W_n$. 
	Define the partial word $w_i \in \bar{M}^{2n}$ for each $i \in
        \{1,\dots,n\}$ by:
        \[w_i = (e \ept)^{i-1} \, (x\ept) \, (e\ept)^{n-i}\]
        For each $1 \leq j \leq n$, define the partial word:
        \[v_{j}' = (\ept e)^{j-1} \, (\ept y) \, (\ept e)^{n-j}\]
        It is clear by definition that, for any $1 \leq i \leq j \leq n$, we
        have $w_i \circ v_j' = xy$, and that for any $1 \leq j < i \leq n$ we
        have $w_i \circ v_j' = yx$. Thus, for any $1 \leq i < j \leq n$
        we can pick as \textit{complementary witness} for $w_i$ and $w_j$ the
        word $v_{ij} \coloneq v_i'$, so that $w_i \circ v_{ij} = xy \neq yx = w_j
        \circ v_{ij}$. 
        Thus,
        by the immediate analogue of \cref{thm:fooling_lower_bound} for
        monoids, the out-of-order monoid evaluation problem for $M$ 
	has $\Omega(\log n)$ space complexity.
\end{proof}

\subparagraph*{Logarithmic upper bound.}
We now show the logarithmic upper bound for languages in $\mathbf{FL} \lor \mathbf{Com}$.
Thanks to the closure under variety operations (\cref{closure_mono}), using
again the constant upper bound for commutative monoids (\cref{com_const}),
it suffices to show an upper bound for $\mathbf{FL}$.

Remember our equational definition of $\mathbf{FL}$ using the equation $\mathbf{FL} =
\llbracket (xy)^\omega st (xz)^\omega = (xy)^\omega sxt (xz)^\omega \rrbracket$.
We will now show that, whenever a monoid $M$ satisfies this equation,
then computing the product of elements of~$M$ can be achieved simply by looking at
a constant number of leftmost and rightmost occurrences of each element. 
Let us introduce the requisite definitions on words over~$M$:

\begin{definition}[First-last subword]
  \label{def:fl}
  Let $M$ be a monoid and let $m \in M$.
	Let $u$ be a word over~$M$ of size $n$.
        The \emph{$m$-occurrences of~$u$} is
	the subset $O_m(u) \subseteq [n]$ of the positions $i \in [n]$ such that
        $u[i] = m$. 
	We then define the \emph{first $k$ positions} $\FF^k_m(u)$ as the subset of $O_m(u)$ 
        formed of the $k$ smallest positions:
        note that if there are no more than $k$ occurrences of~$m$ in~$u$ 
        then $\FF^k_m(u) = O_m(u)$.
        We define the \emph{last positions} $\LL^k_m(u)$
        analogously with the $k$ greatest positions.
	Now, the \emph{$k$-first-last subword} of~$u$ is the scattered subword 
        $\FL^k(u)$ formed by only retaining the first and last $k$ occurrences for each $m$, i.e.
	retaining the monoid elements at the positions $\bigcup_{m \in M}
        \FF^k_m(u) \cup \LL^k_m(u)$.
\end{definition}

We note that the first-last subword defined above also occurs in~\cite{tesson},
where it is called $\text{RED}_k$. Their work then extends this definition to
correspond to their variety $\mathbf{W}$, which will correspond to $\mathbf{FL}
\lor \mathbf{Com}$ as we explain later. By contrast, we will use first-last
subwords directly for the simpler variety $\mathbf{FL}$ for our upper bound, and
the extension to $\mathbf{FL} \lor \mathbf{Com}$ is for free thanks to 
\cref{closure_mono} and \cref{com_const}.
We claim the following property on monoids in~$\mathbf{FL}$:

\begin{claimrep}
  \label{clm:flfirstlast}
  Let $M$ be a monoid in $\mathbf{FL}$, and let $k \coloneq |M|$.
  Then for any word $u \in M^*$, letting
  $v \coloneq \FL^k(u)$, the words $u$ and $v$ have the same product in~$M$.
\end{claimrep}

\begin{toappendix}
  The proof of \cref{clm:flfirstlast} relies on the following auxiliary lemma
  from the main text, which we now restate and prove:
\end{toappendix}

To show this, let us point out how to use the $\mathbf{FL}$ equation on words of
elements of~$M$. Using standard pumping arguments, we can show:

\begin{claimrep}
  \label{clm:usepump}
  Let $M$ be a monoid in $\mathbf{FL}$, let $k \coloneq |M|$,
  let $m \in M$, and let $u = s v t$ be a word of~$M^*$ such that $s$
  and $t$ contain exactly $k$ occurrences of~$m$. Let $v'$ be the scattered
  subword of~$v$ obtained by removing all occurrences of~$m$, and let $u'
  \coloneq s v' t$. Then $u$ and $u'$ have the same product in~$M$.
\end{claimrep}

\begin{proof}
Using the assumption, decompose $s$ as $s = s_0 m s_1 m \cdots s_{k-1} m s_k$,
with $s_0, \ldots, s_k$ being words over~$M$. Consider the prefixes $s_i'
\coloneq s_0 m \cdots s_i$ for each $0 \leq i \leq k$. By the pigeonhole
principle, there are two $0 \leq i < j \leq k$ such that $s_i'$ and $s_j'$ have
the same product in~$M$. 

Let us decompose $s = s_- m s' s_+$, where
we set $s_- \coloneq s_0 m \cdots s_{i-1} m s_i$,
we set $s' \coloneq s_{i+1} \cdots m s_j$, and
we set $s_+ \coloneq m s_{j+1} \cdots m s_k$. 

Let us write $\eta$ the canonical morphism from~$M^*$ to~$M$ that achieves
monoid evaluation, i.e., for $w$ a word over~$M$ the value $\eta(w)$ denotes the product of~$w$
in~$M$. We have by definition (*): $\eta(s) = \eta(s_-) \eta(m s') \eta(s_+)$ by definition, and we
claim the following identity (**): $\eta(s) = \eta(s_-) (\eta(m) \eta(s'))^\omega \eta(s_+)$.
Indeed, we know from the choice of~$i$ and~$j$
that $\eta(s_- m s') = \eta(s_-)$, i.e., $\eta(s_-) \eta(ms') = \eta(s_-)$.
Injecting the equation into itself $\omega$ times,
we get $\eta(s_-) (\eta(ms'))^\omega = \eta(s_-)$, altogether we have $\eta(s_-)
\eta(m s') = \eta(s_-) (\eta(ms'))^\omega$, which we can inject in (*) to obtain
(**).

By applying the same reasoning on $t$, we can decompose $t = t_- m t' t_+$ such
that $\eta(t) = \eta(t_-) (\eta(m) \eta(t'))^\omega \eta(t_+)$.

Now we are ready to use the $\mathbf{FL}$ equation. We have:
\[
  \eta(u) = \eta(s) \eta(v) \eta(t) = \eta(s_-) (\eta(m) \eta(s'))^\omega
  \eta(s_+) \eta(v) \eta(t_-) (\eta(m) \eta(t'))^\omega \eta(t_+)
\]
We now show that any occurrence of~$m$ in~$v$ can be eliminated without changing
the product, ie., the image by~$\eta$. Write $v = v_- m v_+$. We have:
\[
  \eta(u) = \eta(s_-) \big[(\eta(m) \eta(s'))^\omega
    (\eta(s_+) \eta(v_-)) \eta(m) (\eta(v_+) \eta(t_-)) (\eta(m) \eta(t'))^\omega
\big] \eta(t_+)
\]
Now, 
recall that $\mathbf{FL} = \llbracket (xy)^\omega st (xz)^\omega = (xy)^\omega sxt
(xz)^\omega \rrbracket$.
Applying this equation from right to left to the right-hand side of the
expression above, we obtain:
\[
  \eta(u) = \eta(s_-) \big[(\eta(m) \eta(s'))^\omega
    (\eta(s_+) \eta(v_-)) (\eta(v_+) \eta(t_-)) (\eta(m) \eta(t'))^\omega
\big] \eta(t_+)
\]
We can repeat this process to eliminate all occurrences of~$m$ in $v$ one by one
until none remain. At the end of the process, we obtain:
\[
  \eta(u) = \eta(s_-) \big[(\eta(m) \eta(s'))^\omega
    \eta(s_+) \eta(v') \eta(t_-) (\eta(m) \eta(t'))^\omega
\big] \eta(t_+)
\]
Substituting back, this implies $\eta(u) = \eta(s) \eta(v') \eta(t)$, our
desired claim.
\end{proof}

This claim allows us to show \cref{clm:flfirstlast}:

\begin{proof}[Proof sketch of \cref{clm:flfirstlast}]
  We do an induction on the number of elements of~$M$ that occur more than $2k$
  times in~$u$. If there are none, then $v = u$ are equal and the claim is
  immediate. Otherwise, we pick an element $m$ occurring more than $2k$ times,
  we isolate its first and last $k$ occurrences, we use \cref{clm:usepump} to
  argue that all other occurrences can be eliminated without changing the
  product in~$M$, and we conclude using the induction hypothesis on the
  resulting word which has one less element occurring more than~$2k$ times.
\end{proof}

\begin{toappendix}
We are now ready to show \cref{clm:flfirstlast}:

\begin{proof}[Proof of \cref{clm:flfirstlast}]
	Let $M \in \mathbf{FL}$ be a monoid and let $k$ be its size.
        Let us show by induction on~$\ell$ the following claim: for any word 
        $u$ over $M$ where there are $\ell$ letters of~$M$ appearing more than $2k$
        times, letting $v \coloneq \FL^k(u)$, the words $u$ and $v$ have the
        same product in~$M$.

        The base case of $\ell = 0$ is immediate because we then have $u = v$.

        For the induction, fix $\ell > 0$ and assume that the claim holds
        for~$\ell-1$. 
	Pick some element $m \in M$ which appears more than $2k$ times in $u$.
	Split $u$ in $3$ subwords $u = u_0 \cdot u_1 \cdot u_2$,
        such that $u_0$ and $u_2$ respectively contain the $k$ first and $k$
        last occurrences of~$m$ in $u$.
        Let $u_1'$ be the scattered subword of~$u_1$ obtained by removing all
        the occurrences of~$m$, and let $u' \coloneq u_0 \cdot u_1' \cdot u_2$.
        We see that $\FL^k(u) = \FL^k(u')$, intuitively because none of the
        positions removed in~$u'$ relative to~$u$ were kept in $\FL^k(u)$.
        Hence, we can deduce the following claim (*): $v = \FL^k(u)$ and $v' \coloneq \FL^k(u')$ have the same product
        in~$M$ (because, as we argued, they are the same word).
        Further, $u'$ has $\ell-1$ letters appearing more than $2k$ times, so we
        can apply the induction hypothesis to it: letting $v' \coloneq
        \FL^k(u')$, we have the following claim (**): the product of $u'$ and $v'$ in~$M$ is
        identical.
        Last, 
        using \cref{clm:usepump} to $u_0 \cdot u_1 \cdot u_2$ where $u_0$ and
        $u_2$ each contain $k$ occurrences of~$m$, we know that $u$ and $u'$
        have the same product in~$M$.
        Combining (*) and (**) and (***), we know that $u$ and $v$ have the same
        product in~$M$. This completes the induction step and concludes the
        proof.
\end{proof}
\end{toappendix}

We can now state our upper bound on out-of-order evaluation  for monoids
in~$\mathbf{FL}$:

%
%
%
%
%
%
%
%

\begin{lemmarep}\label{mon_log_up}
	For any fixed monoid $M\in\mathbf{FL}$, 
	the out-of-order evaluation problem for $M$ has $O(\log n)$ space complexity.
\end{lemmarep}

\begin{proofsketch}
  The algorithm remembers the $k$ minimal and the $k$ maximal positions of each
  element of~$M$ during the stream. This allows it to construct $\FL^k(u)$ for
  $u$ the streamed word, and thanks to \cref{clm:flfirstlast} the product of
  $\FL^k(u)$ in~$M$ is identical to the desired answer, namely, the product
  of~$u$ in~$M$.
\end{proofsketch}

\begin{proof}
	Let $M \in \mathbf{FL}$ be a monoid, and let
        $k \coloneq |M|$ be its size.
        Recall that $M$ is fixed, so $k$ is a
        constant.
        The algorithm works as follows on a word $u$ of length~$n$: it maintains 
        for each element $m\in M$ of the monoid,
        we maintain the positions of its first (leftmost) $k$ occurrences, and
        of its last (rightmost) $k$ occurrences.
        Note in particular that if
        $m$ occurs $2k$ times or less, then we store the positions of all its
        occurrences.
        More precisely, whenever we receive a tuple $(m,i,n)$, then we compare
        $i$ to the
        $2k$ stored occurrences of the letter~$m$, and we insert it at the
        proper place in the two sorted lists.
        This takes logarithmic space because, for each element in~$M$ (whose
        size is fixed), we store a constant number of occurrences (namely,
        $2k$), each of which
        takes $O(\log n)$ bits to write down.
        (Incidentally, note that the algorithm
        also takes constant time per streamed symbol.)
        At the end of the stream, we argue that the information stored allows us
        to determine the $k$-first-last subword of the word $u$. Indeed, the
        algorithm knows the sets $\FF^k_m(u)$ and $\LL^k_m(u)$ for each $m \in
        M$, thus by definition it can compute $\FL^k(u)$ (and this computation
        takes constant time and space). Now, we know by
        \cref{clm:flfirstlast} that $u$ and $\FL^k(u)$ have the same product
        in~$M$, so the algorithm simply returns the product in~$M$ of
        $\FL^k(u)$, again computed in constant time and space because $\FL^k(u)$
        has constant length. This concludes the proof.
\end{proof}

\begin{example}
	\label{ex:ab}
	The syntactic monoid of the language \(ab\), which
	is the five-element monoid \(\{1, a, b, ab, 0\}\), belongs to
        \(\mathbf{FL}\) because it satisfies the $\mathbf{FL}$ equation:
	\[
		(xy)^\omega st (xz)^\omega = (xy)^\omega sxt (xz)^\omega.
	\]
	Indeed, it has two idempotents, \(0\) and \(1\). If one of the
	\(\omega\)-powers evaluates to \(0\), the equation holds trivially
	since both sides evaluate to \(0\). The only remaining case is
	\((xy)^\omega = (xz)^\omega = 1\), which implies \(x=1\), and then both
	sides evaluate to \(st\).

\end{example}

\subparagraph*{Linear lower bound.}
The only remaining piece to show \cref{thm:monoid} is the linear lower bound on
monoids not in $\mathbf{FL} \lor \mathbf{Com}$. Formally:

\begin{lemmarep}\label{mon_lin}
For any fixed monoid $M \notin \mathbf{FL} \lor \mathbf{Com}$,
the out-of-order evaluation problem for $M$ has $\Omega(n)$ space complexity.
\end{lemmarep}

To show this result, we need an equation characterizing the variety $\mathbf{FL} \lor \mathbf{Com}$, which we will then use to obtain witnessing elements to
build fooling sets. 
It turns out that  $\mathbf{FL} \lor \mathbf{Com}$ coincides with the class
$\mathbf{W}$ in \cite{tesson}, so we will reuse part of their proof; however
our approach has the merit of proposing a simpler equation for the class, and
of giving a presentation of $\mathbf{W}$
as a join involving a simpler variety $\mathbf{FL}$ with an equational
characterization. We also remind the reader that the results of~\cite{tesson}
pertain to communication complexity; as far as we understand, it is a
coincidence that tractability boundaries of our problem coincides with one of
theirs, because there does not appear to be reductions from one problem to the
other.

We accordingly give the following characterization of $\mathbf{FL} \lor
\mathbf{Com}$:

\begin{lemma}\label{fleq}
	Let $M$ be a monoid. The following conditions are equivalent:
	\begin{enumerate}
		\item The monoid $M$ belongs to $\mathbf{FL} \lor \mathbf{Com}$.
		\item The monoid $M$ satisfies the equation $(xa)^\omega sxtu (xb)^\omega = (xa)^\omega stxu (xb)^\omega$.
                \item The monoid $M$ satisfies the equations  of the
                  monoid variety $\mathbf{W}$
                  in~\cite{tesson}, namely:
			\begin{itemize}
				\item $(swt)^\omega xw^\omega y (uwv)^\omega = (swt)^\omega xy (uwv)^\omega$
				\item $(swt)^\omega xw (uwv)^\omega = (swt)^\omega wx (uwv)^\omega$
			\end{itemize}
	\end{enumerate}
\end{lemma}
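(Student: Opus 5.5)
I would prove the cycle of implications $1 \Rightarrow 2 \Rightarrow 3 \Rightarrow 1$, handling the last implication by appeal to the corresponding part of~\cite{tesson}.

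\textbf{From 1 to 2.} Write $E$ for the equation $(xa)^\omega sxtu (xb)^\omega = (xa)^\omega stxu (xb)^\omega$. The monoids satisfying a fixed equation form a variety, so it suffices to check $E$ on $\mathbf{FL}$ and on $\mathbf{Com}$.
\begin{itemize}
\item \emph{On $\mathbf{Com}$.} Both sides of $E$ contain the same multiset of letters, so they are equal.
\item \emph{On $\mathbf{FL}$.} Apply the $\mathbf{FL}$ equation with $y=a$ and $z=b$ twice. First, taking $s'=s$ and $t'=tu$, the $\mathbf{FL}$ equation reads $(xa)^\omega s\,tu (xb)^\omega = (xa)^\omega s x tu (xb)^\omega$, which deletes the $x$ on the left side of $E$. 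Second, taking $s'=st$ and $t'=u$, it reads $(xa)^\omega st\,u(xb)^\omega = (xa)^\omega st x u (xb)^\omega$, which inserts the $x$ on the right side of $E$. Chaining the two gives $E$.
\end{itemize}

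\textbf{From 2 to 3.} We derive both $\mathbf{W}$ equations from $E$.
\begin{itemize}
\item \emph{Second $\mathbf{W}$ equation.} The first step is to make a letter $w$ appear inside both idempotent brackets. The substitution $x \mapsto w$, $a \mapsto ts w$ turns $(xa)^\omega$ into $(wtsw)^\omega$, and the identity $(swt)^{\omega}=(swt)^{\omega+1}$ lets us rewrite $(swt)^\omega$ as $sw\,(tsw)^{\omega}\,ts\ldots$ up to conjugation. Concretely, $(swt)^\omega = (swt)^{\omega}\, s w (tsw)^{\omega-1} t$, and $(tsw)^\omega$ should be reshaped into a form $(wa')^\omega$. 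This conjugation bookkeeping is routine but fiddly. Once the brackets have the shape $(w\cdot)^\omega \ldots (w\cdot)^\omega$, $E$ with $s,u$ empty and $t\mapsto x$ gives $wx = xw$ between them.
\item \emph{First $\mathbf{W}$ equation.} Use $E$ repeatedly to slide each copy of $w$ in $w^\omega$ leftward until it is adjacent to the left bracket. There it is absorbed, since $(\ldots w\ldots)^\omega w$ equals an expression that can be reabsorbed, as in the proof of \cref{clm:usepump}.
\end{itemize}
I expect this absorption to be the main obstacle. $E$ only commutes $x$, whereas deleting $w$ needs the $\mathbf{FL}$-style idempotent trick. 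I would try to show $(xa)^\omega x^\omega = (xa)^\omega$ modulo commutation: using $E$, write $(xa)^\omega = (xa)^{\omega}(xa)^{\omega}$, move the $\omega$ copies of $x$ from the second factor to the right, and compare.

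\textbf{From 3 to 1.} Here I would rely on~\cite{tesson}, where $\mathbf{W}$ is shown to be generated by monoids whose product is determined by $\mathrm{RED}_k$ together with commutative counting information. The goal is to show that every $M \in \mathbf{W}$ divides a product $M_1 \times M_2$ with $M_1 \in \mathbf{FL}$ and $M_2 \in \mathbf{Com}$. The idea is that the product of $u$ is determined by $\FL^k(u)$ together with the commutative image of $u$, taken modulo a threshold and a period.
\begin{itemize}
\item Take $M_1$ to be the quotient of $M^*$ that records $\FL^k$; it lies in $\mathbf{FL}$ by a direct check.
\item Take $M_2$ to be the commutative monoid recording letter counts, with threshold $k$ and period $\omega$.
\end{itemize}
The key claim is that the map $u \mapsto (\FL^k(u), \text{counts}(u))$ determines the product of $u$ in $M$. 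I would prove it exactly as in \cref{clm:usepump}: middle occurrences can be moved freely by the second $\mathbf{W}$ equation, and grouped powers of them can be deleted by the first. This gives a division, and hence $M \in \mathbf{FL} \lor \mathbf{Com}$.
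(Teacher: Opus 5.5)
Your proposal follows essentially the same route as the paper: the cycle $1\Rightarrow2\Rightarrow3\Rightarrow1$, with $1\Rightarrow2$ by two applications of the $\mathbf{FL}$ equation, $2\Rightarrow3$ by conjugating the brackets so that they start with $w$ (the paper's intermediate equation $(a'xa)^\omega sxtu(b'xb)^\omega=(a'xa)^\omega stxu(b'xb)^\omega$) and then absorbing $w^\omega$ by doubling the left idempotent and moving its $\omega$ copies of $w$ out and back, and $3\Rightarrow1$ via the $\mathrm{RED}_{t,p}$ result of~\cite{tesson} split into an $\FL^k$ congruence and a counting congruence. When carrying out $2\Rightarrow3$, note that $(swt)^\omega=(swt)^{\omega+1}$ fails in non-aperiodic monoids such as $\mathbb{Z}/2\mathbb{Z}\in\mathbf{Com}$, so you should conjugate via $(swt)^\omega=(swt)^{2\omega}=s\,(wts)^\omega(wts)^{\omega-1}wt$ (and symmetrically on the right), putting the leftover factors into the $s$ and $u$ slots of the equation rather than leaving those slots empty; also, a bracket never absorbs a single $w$, only a whole block $w^\omega$, which is exactly what your doubling trick provides.
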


\begin{proofsketch}
	We show $1\to2\to3\to1$.
	The $1\to2$ direction follows from the equations of $\mathbf{FL}$ and $\mathbf{Com}$.
	The $2\to3$ direction is easy by equation manipulation.
	For $3\to1$ we use the $RED_{t,p}$ congruence defined in \cite{tesson}.
        We reuse their proof \cite[Theorem~3]{tesson} which shows that any monoid
        satisfying the equations of~$\mathbf{W}$ is refined by a
        $RED_{t,p}$-congruence: this is the analogue of the result we showed
        for~$\mathbf{FL}$ in \cref{clm:flfirstlast}.
        We can then conclude because $RED_{t,p}$ is the intersection of an
        $\mathbf{FL}$ congruence and of a $\mathbf{Com}$ congruence.
\end{proofsketch}

\begin{toappendix}
  To show our characterization of $\mathbf{FL} \lor \mathbf{Com}$ (\cref{fleq}),
  we first define the \emph{$k$-first-last relation}:

  \begin{definition}
	The \emph{$k$-first-last relation} $\sim_{FL,k}$ on $\Sigma^*$ is defined by
	letting $u \sim_{FL,k} v$ whenever $\FL^k(u) = \FL^k(v)$. The \emph{$k$-first-last
	class} $[u]_{\FL^k}$ of a word $u \in \Sigma^*$ is the language of the words
	$v$ such that $v \sim_{FL,k} u$. An \emph{FL language} is a language $L \subseteq
	\Sigma^*$ that can be written as a finite union of $k$-first-last classes,
	i.e., there are words $u_1, \ldots, u_\ell \in \Sigma^*$ such that $L =
	\bigcup{1 \leq i \leq \ell} [u_i]_{\FL^k}$.
\end{definition}

  We first verify that the $k$-first-last relation is a congruence:

\begin{claimrep}
  \label{clm:flkcong}
  The $k$-first-last relation is a congruence, i.e., it is an equivalence relation and further $u \sim_{FL,k} v$ and
  $u' \sim_{FL,k} v'$ implies $uu' \sim_{FL,k} vv'$. 
\end{claimrep}

\begin{proofsketch}
	Note that $\sim_{FL,k}$ is an equivalence relation, which follows directly from its definition.
	Now observe that the $i$-th occurrence of any letter in $uu'$ appears in 
	$\FL^k(uu')$ if and only if it appears in $\FL^k(\FL^k(u)\FL^k(u'))$ and thus
	$ \FL^k(uu') = \FL^k(\FL^k(u)\FL^k(u')) = \FL^k(\FL^k(v)\FL^k(v')) = \FL^k(vv') $.
	Hence $uu' \sim_{FL,k} vv'$, and thus $\sim_{FL,k}$ is a congruence.
\end{proofsketch}

\begin{proof}
	We first note that $\sim_{FL,k}$ is an equivalence relation, which follows directly from its definition.
	We now prove that $\sim_{FL,k}$ is a congruence on $\Sigma^*$.
	Let $u,v,u',v' \in \Sigma^*$ be words over $\Sigma$ with $u \sim_{FL,k} v$ and $u' \sim_{FL,k} v'$,
	we want to show that $uu' \sim_{FL,k} vv'$.
	By definition, this means $\FL^k(u) = \FL^k(v)$ and $\FL^k(u') = \FL^k(v')$

	We claim that $\FL^k(uu')=\FL^k(\FL^k(u)\FL^k(u'))$.
	Let $a_i$ be the $i$-th occurrence of $a$ in $uu'$ with $i \leq k$.
	By definition $a_i$ will be the $i$-th $a$ $in \FL^k(uu')$
	Let $j$ be the number of occurrences of $a$ in the word $u$.
	If $i$ is smaller than $j$ then $a_i$ will be the $i$-th $a$ in the word $u$,
	thus also be the $i$-th $a$ in $\FL^k(u)$, 
	thus also be the $i$-th $a$ in $\FL^k(\FL^k(u)\FL^k(u'))$.
	If $i$ is bigger than $j$, then $a_i$ will be the $i-j$-th $a$ in the word $u'$,
	thus be the $i-j$-th $a$ in $\FL^k(u')$, 
	thus be the $i$-th $a$ in $\FL^k(\FL^k(u)\FL^k(u'))$.
	Using the same arguments we show the equality for the $k$-last occurrences.
	Therefore $\FL^k(uu')=\FL^k(\FL^k(u)\FL^k(u'))$.

	Using this identity, we obtain
	$ \FL^k(uu') = \FL^k(\FL^k(u)\FL^k(u')) = \FL^k(\FL^k(v)\FL^k(v')) = \FL^k(vv') $.
	Hence $uu' \sim_{FL,k} vv'$, and thus $\sim_{FL,k}$ is a congruence.
\end{proof}

We also need the easy observation that monoids refined by the $\FL^k$-congruence
must be in the class $\mathbf{FL}$:

\begin{lemma}
  \label{lem:fltofl}
	Let $M$ be a monoid, and $k \coloneq \omega$
        be the idempotent power of $M$.
        Assume that, 
	for all words $v,w$ over $M$ such that $\FL^k(v) = \FL^k(w)$, we have
        that the product of $v$ in~$M$ is equal to the product of~$w$ in~$M$.
        Then
	$M$ is a monoid belonging
        to the variety $\mathbf{FL}$.
\end{lemma}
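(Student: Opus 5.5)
We have to show that for all $x,y,z,s,t\in M$,
\[(xy)^\omega st (xz)^\omega = (xy)^\omega sxt (xz)^\omega.\]
The plan is to realize both sides as products of concrete words over $M$ and apply the hypothesis. Since $m^\omega = m^{\omega k'}$ for every $k'\geq 1$, we may replace each $\omega$-power by a large power $N\omega$, with $N$ chosen so that $N\omega \geq k = \omega$; in fact $N=1$ already suffices when $\omega\geq 1$. Accordingly, consider the words over the alphabet $M$ (each letter being a monoid element):
\[v \coloneq (x\,y)^{\omega}\, s\, t\, (x\,z)^{\omega}, \qquad w \coloneq (x\,y)^{\omega}\, s\, x\, t\, (x\,z)^{\omega},\]
where $(x\,y)^\omega$ denotes the word $xy$ repeated $\omega$ times, a word of length $2\omega$. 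The product of $v$ in $M$ is the left-hand side and the product of $w$ is the right-hand side. By the hypothesis of the lemma, it therefore suffices to prove $\FL^k(v)=\FL^k(w)$.

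The key step is to compare first-last subwords letter by letter. The words $w$ and $v$ differ only by one inserted occurrence of the letter $x$ (as an element of $M$) in the middle. This occurrence is preceded by at least $\omega=k$ occurrences of $x$ (inside $(xy)^\omega$) and followed by at least $k$ occurrences of $x$ (inside $(xz)^\omega$). Hence it is neither among the first $k$ nor among the last $k$ occurrences of $x$ in $w$, so it is discarded in $\FL^k(w)$. For every other letter $m$, the positions of $m$ and their relative order with respect to all other retained positions are identical in $v$ and in $w$. The first and last $k$ occurrences of $x$ are likewise the same positions, namely the ones inside the two powers. Thus the retained scattered subwords coincide: $\FL^k(w)=\FL^k(v)$, and the equation follows.

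The main obstacle is the care needed with letters that coincide as monoid elements, for instance $y=x$, $s=x$ or $t=x$. In such cases the occurrence counts shift, but the inserted $x$ still has at least $k$ occurrences of $x$ strictly on each side, so it remains discarded; the argument only ever uses ``at least $k$ on each side'', so coincidences are harmless. A smaller point is the degenerate case $\omega=0$: the idempotent power is the smallest $\omega$ with $x^\omega=x^{2\omega}$, and the paper's definition would allow $\omega=0$ trivially. If that is the intended reading, I would take $\omega$ to be the smallest \emph{positive} such integer, and otherwise use the power $\omega' = \max(\omega,1)$ both in the words and as $k$. The cleanest way to handle this is to pad with $\omega\cdot\lceil k/\omega\rceil$ copies, which yields the same idempotent value.
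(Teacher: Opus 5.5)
Your proof is correct and is essentially the paper's argument: both sides are realized as words over $M$ that differ by a single occurrence of $x$ with at least $k=\omega$ occurrences of $x$ on each side, so $\FL^k$ discards it, the two first-last subwords coincide, and the hypothesis gives equal products. Your version is in fact the cleaner one. The paper's write-up says it must prove the $\mathbf{FL}\lor\mathbf{Com}$ equation $(xa)^\omega sxtu(xb)^\omega=(xa)^\omega stxu(xb)^\omega$, yet its key step (deleting the middle $x$ leaves $\FL^k$ unchanged) is exactly your direct proof of the $\mathbf{FL}$ equation $(xy)^\omega st(xz)^\omega=(xy)^\omega sxt(xz)^\omega$.
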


\begin{proof}
  To show that $M$ is in $\mathbf{FL}$, letting $x,a,b,s,t,u$ be elements in
  $M$, we must show that
  $(xa)^\omega sxtu (xb)^\omega = (xa)^\omega stxu (xb)^\omega$.
  Thus, defining the words over~$M$
  $w \coloneq (xa)^\omega sxtu (xb)^\omega$ and 
  $v \coloneq (xa)^\omega stxu (xb)^\omega$, we must show that their product
  in~$M$ is identical. For this, from our assumption on~$M$, it suffices to show
  that $\FL^k(v) = \FL^k(w)$.
  By definition we have:
  \[\FL^k(w) =
    \FL^k((xa)^\omega sxtu (xb)^\omega)
    \FL^k((xa)^\omega stu (xb)^\omega)
  \]
  because $x$ appears at least $k$ times before and after the occurrence we have
  removed. For the same reason, $\FL^k(v) = \FL^k((xa)^\omega stu (xb)^\omega)$.
  Hence, $\FL^k(v) = \FL^k(w)$, which suffices to conclude.
%
%
\end{proof}

We now have all the ingredients to prove \cref{fleq}:

\begin{proof}[Proof of \cref{fleq}]
	We will do a $1\to2\to3\to1$ style proof.
	The $1\to2$ direction follows from the equations of $\mathbf{FL}$ and $\mathbf{Com}$.
	The $2\to3$ direction follows from the equation $(xa)^\omega sxtu (xb)^\omega = (xa)^\omega stxu (xb)^\omega$.
	For $3\to1$ we use the $RED_{t,p}$ congruence defined in \cite{tesson}
	to show that any monoid can be decomposed into a commutative congruence and in a
	$\mathbf{FL}$ type congruence based on the $k$-first-last occurrences of each letter.

	We first show the $1\to2$ direction. We first show that 
	all $\mathbf{FL}$ monoids satisfy the equation of~$2$. Indeed, using the
        equation of $\mathbf{FL}$,
	we have $(xy)^\omega sxtu (xz)^\omega = (xy)^\omega stu (xz)^\omega = (xy)^\omega stxu (xz)^\omega$.
	We then observe that $\mathbf{Com}$ monoids obviously satisfy the equation.
        Hence, considering the variety $\mathbf{V}$ of the monoids satisfying the equation
        of~$2$, we know that $\mathbf{V}$ contains $\mathbf{FL}$ and it contains $\mathbf{Com}$. As
        $\mathbf{FL} \lor \mathbf{Com}$ is the smallest variety containing
        $\mathbf{FL} \lor \mathbf{Com}$, it means that $\mathbf{FL} \lor
        \mathbf{Com}$ is included in $\mathbf{V}$. Hence, we have concluded that
        all monoids of $\mathbf{FL} \lor \mathbf{Com}$ satisfy the equation
        of~2.

        \medskip

	We now show $2\to 3$ direction.
	Assume $(xa)^\omega sxtu (xb)^\omega = (xa)^\omega stxu (xb)^\omega$.
	Let us first show that this implies another equation that we will use
        afterwards, namely (*):
        \[(a'xa)^\omega sxtu (b'xb)^\omega = (a'xa)^\omega stxu (b'xb)^\omega\]
        For this, we have:
        \begin{align*}
          (a'xa)^\omega sxtu (b'xb)^\omega 
              & = 
          (a'xa)^\omega (a'xa)^\omega sxtu (b'xb)^\omega (b'xb)^\omega  &
          \text{by idempotence}\\
              & = 
          a' (xaa')^\omega (xaa')^{\omega-1} xa sxtu b' (xbb')^\omega
          (xbb')^{\omega-1} xb  &
          \text{by conjugation}\\
              & = 
          a' (xaa')^\omega (xaa')^{\omega-1} xa stxu b' (xbb')^\omega
          (xbb')^{\omega-1} xb  &
          \text{by the eqn.\ of~2}\\
              & = 
          (a'xa)^\omega (a'xa)^\omega stxu (b'xb)^\omega (b'xb)^\omega  &
          \text{by conjugation}\\
              & = 
          (a'xa)^\omega stxu (b'xb)^\omega 
          \text{by idempotence}
        \end{align*}

        We can then show the first equation using (*):
	\begin{align*}
		(swt)^\omega xw^\omega y (uwv)^\omega 
			&= (swt)^\omega (swt)^\omega xw^\omega y (uwv)^\omega &\text{by idempotence}\\
			&= (swt)^\omega (st)^\omega xw^\omega w^\omega y
          (uwv)^\omega\quad &\text{moving $w$ one by one with (*)}\\
			&= (swt)^\omega (st)^\omega x w^\omega y (uwv)^\omega\quad &\text{by idempotence}\\
			&= (swt)^\omega (swt)^\omega x y (uwv)^\omega\quad &\text{moving them back one by one}\\
			&= (swt)^\omega x y (uwv)^\omega\quad &\text{by idempotence}\\
	 \end{align*}
	The second is immediate as it is simply equation (*) in the specific
        case where $s$ and $u$ are the neutral element. Hence, we have shown the
        equation of~3.

	We now show the $3\to 1$ direction.
	Let $M$ be a monoid in $\llbracket (swt)^\omega xw^\omega y (uwv)^\omega 
			= (swt)^\omega xy (uwv)^\omega ,
			(swt)^\omega xw (uwv)^\omega 
			= (swt)^\omega wx (uwv)^\omega \rrbracket$.
        Following~\cite{tesson}, we define the equivalence relation $RED_{t,p}$
        on words over~$M$: two words $\alpha, \beta$ are equivalent under
        $RED_{t,p}$ if $\FL^t(\alpha) = \FL^t(\beta)$ (this is written $RED_t$
        in~\cite{tesson}) and if for every letter $m \in M$ the number of
        occurrences of~$m$ in~$\alpha$ and in~$\beta$ has the same remainder
        modulo~$p$. Intuitively, this equivalence relation is a refinement of
        ``having the same image by~$\FL^t$'' to also account for the commutative
        information tracked by $\mathbf{Com}$ monoids. By 
        \cite[Theorem~3]{tesson}, as $M$ satisfies the equation of their
        variety $\mathbf{W}$,
        there is a $RED_{t,p}$ congruence which
        captures $M$, i.e., the equivalence relation on words of~$M^*$ that
        equates words with the same product in~$M$ is refined by a $RED_{t,p}$
        congruence for some $t$ and $p$. (Intuitively, this is similar to the
        claim of \cref{clm:flfirstlast}, which applied to $\mathbf{FL}$
        monoids and to the equivalence relation defined by~$\FL^k$.) We now show
        that this implies that $M$ is in $\mathbf{FL} \lor \mathbf{Com}$. 
        For this,
        notice that the classes of the $RED_{t,p}$ congruence are by definition
        intersections of equivalence classes of the $k$-first-last relation
        (which is a congruence, cf \cref{clm:flkcong}) on the one hand, and of
        the \emph{modulo-$p$ congruence} that equates two words having the same number of
        occurrences of every element of~$M$ modulo~$p$ on the other hand (it is
        immediate that this is also a congruence). Now, the equivalence classes of the congruence
        defined by~$M$ on~$M^*$ (which equates two words having the same product
        in~$M$) are unions of such intersections. This implies that $M$ is a
        quotient of $M_1 \times M_2$, where $M_1$ and $M_2$ are the monoids
        defined respectively by the equivalence classes of the $k$-first-last
        congruence and of the modulo-$p$ congruence. By \cref{lem:fltofl}, $M_1$
        is a monoid in~$\mathbf{FL}$, and $M_2$ is obviously a monoid
        of~$\mathbf{Com}$. Hence, $M$ is a quotient of a product of one monoid
        in~$\mathbf{FL}$ and one monoid in~$\mathbf{Com}$, which concludes.
\end{proof}
\end{toappendix}

With this characterization, we can now prove the linear lower
bound (\cref{mon_lin}):

\begin{proof}[Proof of \cref{mon_lin}]
        Let $M$ be a monoid not in $\mathbf{FL} \lor \mathbf{Com}$. By
        \cref{fleq}, there are elements in $M$ witnessing a violation of the
        equation: $(xa)^\omega sxtu (xb)^\omega \neq (xa)^\omega stxu
        (xb)^\omega$.
	We use \cref{thm:fooling_lower_bound} and build a $(2^n,5n+2)$-fooling set
        for any $n>0$.
	For each $\alpha \in\{0,1\}^n$ we define $w_{\alpha_i}$ by either
        $\ept x \ept e \ept$ or $\ept e \ept x \ept$ 
	depending on the bit $\alpha_i$, let
	$w_\alpha \coloneq \ept w_{\alpha,1} w_{\alpha,2} \dots w_{\alpha,n}
        \ept$, and let $W_n \coloneq \{w_\alpha \mid \alpha \in \{0,1\}^n\}$ be
        the fooling set.

        Now, for any $\alpha \neq \beta$, let $\iota(\alpha,\beta)$ be a
        position of $\{1, \ldots, n\}$
        such that $\alpha_{\iota(\alpha,\beta)} \neq \beta_{\iota(\alpha,\beta)}$. The complementary witness is
        then:
	\[v_{\alpha\beta} = ((xa)^{n\omega - \iota(\alpha,\beta)+1}) ~ (e \ept e
          \ept a)^{\iota(\alpha,\beta)-1} ~ (s \ept t \ept u) ~
        (e \ept e \ept b)^{n-\iota(\alpha,\beta)} ~
      ((xb)^{n\omega-n+\iota(\alpha,\beta)})\]
        where the first and last groups in parentheses correspond to a single
        element.

        We can then see that one of $w_\alpha \circ v_{\alpha\beta}$ 
        and $w_\beta \circ v_{\alpha\beta}$
        is 
        $(xa)^{n\omega} (sxtu) (xb)^{n\omega} = (xa)^\omega sxtu (xb)^\omega$
        and the other  is
        $(xa)^{n\omega} (stxu) (xb)^{n\omega} = (xa)^\omega stxu (xb)^\omega$,
        so by assumption they are indeed different, showing that $W_n$ is a
        fooling set.
	This implies that the out-of-order monoid evaluation problem for $M$ 
	has $\Omega(n)$ space complexity.
\end{proof}

We illustrate the lower bound of~\cref{mon_lin} with the following example:

\begin{example}
	\label{ex:abba}
	We show that the syntactic monoid of the language \(M(a^*bba^*)\) is not in the
	class \(\mathbf{FL}\lor\mathbf{Com}\), and therefore admits an
	\(\Omega(n)\) space lower bound. This monoid has seven elements \(\{1,
	a, b, ab, ba, bb, 0\}\), where \(1\), \(a\), and \(0\) are idempotent,
	and satisfies \(bbb=0=aba\).

	It does not satisfy the equation of~\cref{fleq}\[ (xy)^\omega sxtu (xz)^\omega = (xy)^\omega
	stxu (xz)^\omega, \] where we rename the variables of the equation to
	avoid confusion with the monoid elements. Taking \(x=y=z=a\), \(s=1\),
	and \(t=u=b\), the left-hand side evaluates to \(abba = bb\), while the
	right-hand side evaluates to \(ababa = 0\).

	Hence, by~\cref{mon_lin}, the out-of-order monoid evaluation problem for
	\(M(a^*bba^*)\) requires \(\Theta(n)\) space.
\end{example}

\section{Constant-Space Out-of-Order Semigroup Evaluation}
\label{sec:semigroup}
In the previous section, we characterized the out-of-order evaluation problem
for monoids, i.e., the case of out-of-order membership for languages with a neutral
letter. This assumption on languages is a common one, and it is made throughout
in~\cite{tesson}. We now study what happens when this assumption is removed. For
this, we turn to the out-of-order evaluation problem for finite
\emph{semigroups} instead of monoids. In this section, we first define the problem and comment on
the relationship to monoids and languages. We then show our main result for
semigroups, which characterizes the constant-space class.

%

\subparagraph*{Out-of-order semigroup evaluation.}
All semigroups in this paper are implicitly finite.
The definition of out-of-order evaluation for semigroups is defined exactly
like for monoids, but with a semigroup. By the immediate analogue of \cref{closure_mono}, the
complexity regimes of out-of-order semigroup evaluation form varieties of
semigroups. Further, the definitions of fooling sets and the corresponding lower
bound (\cref{thm:fooling_lower_bound}) also immediately adapt. The
\emph{syntactic semigroup} of a language $L$ is defined like the syntactic monoid
but as a quotient of $\Sigma^+$, i.e., without including the empty word.
(However, it may have a neutral element nevertheless, for instance if $L$
features a neutral letter.)

Like monoid evaluation, the out-of-order semigroup evaluation problem is a coarsening of out-of-order
membership, but it is less coarse, in particular it does not require the presence of a neutral
letter.
We note that our results on monoids already imply lower bounds 
on some semigroups. For instance, semigroups that are not \emph{locally
commutative}, i.e., which admit a subsemigroup which is a non-commutative
monoid, directly inherit the logarithmic space lower bound on
monoids (\cref{mon_log_lower}) from the previous section:

\begin{lemma}
  \label{lem:lcom}
  Let $S$ be a fixed semigroup. If $S$ is not \emph{locally commutative}, i.e.,
  if
  $S$ does not satisfy the equation
   $s^\omega x s^\omega y s^\omega = s^\omega y s^\omega x s^\omega$, 
   then the out-of-order evaluation problem for~$S$ has $\Omega(\log n)$ space complexity.
\end{lemma}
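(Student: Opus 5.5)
The plan is to reduce to the monoid lower bound of \cref{mon_log_lower} through a local submonoid of~$S$. Suppose $S$ violates the equation. Then there are $s, x, y \in S$ with $e x e y e \neq e y e x e$, where $e \coloneq s^\omega$ is an idempotent. Consider the local monoid $eSe = \{ e z e \mid z \in S\}$. It is a subsemigroup of~$S$, since $(eze)(ez'e) = e(zez')e$, and it has neutral element~$e$, since $e(eze) = eze = (eze)e$ by idempotence. Set $X \coloneq exe$ and $Y \coloneq eye$. Both lie in $eSe$, and $XY = exeeye = exeye$ while $YX = eyexe$. The violation therefore says exactly that $XY \neq YX$, so $eSe$ is a non-commutative monoid.

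Next I would transfer the fooling set of \cref{mon_log_lower} directly to~$S$, using its semigroup analogue via the adapted version of \cref{thm:fooling_lower_bound}. Every word used in that fooling set and in its witnesses has letters in $\{e, X, Y\} \subseteq eSe$. The product computed in~$S$ of such a word equals its product in the monoid $eSe$, because the monoid operation is just the restriction of the operation of~$S$ and $e$ is a genuine element of~$S$. So the words $w_i = (e\ept)^{i-1}(X\ept)(e\ept)^{n-i}$ and the witnesses $v_{ij}$ built from $e$ and $Y$ still produce the distinct products $XY$ and $YX$.

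This gives an $(n,2n)$-fooling set for out-of-order evaluation of~$S$, and hence the $\Omega(\log n)$ bound. An alternative route is to observe that any out-of-order algorithm for~$S$ also solves the problem for the sub-semigroup $eSe$, and then apply \cref{mon_log_lower} to that monoid. This is valid provided the monoid lower bound only uses inputs drawn from $eSe$, which it does.

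The only point that needs care is checking that $eSe$ really is a monoid whose neutral element is $e$ and not some element outside~$S$. Idempotence of $e = s^\omega$ takes care of this, so I do not expect a real obstacle.
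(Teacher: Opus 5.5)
Your proof is correct and follows essentially the same route as the paper, which also sets $x' = s^\omega x s^\omega$ and $y' = s^\omega y s^\omega$. The paper notes that these, together with $s^\omega$, generate a non-commutative monoid with neutral element $s^\omega$, and concludes via \cref{mon_log_lower} and closure under sub-semigroups. Working with the whole local monoid $eSe$ instead of that generated sub-semigroup, or transferring the fooling set directly, makes no essential difference.
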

\begin{proof}
	Let $S$ be a semigroup and $s,x,y\in S$ such that 
	$s^\omega x s^\omega y s^\omega \neq s^\omega y s^\omega x s^\omega$.
	Let $x'=s^\omega x s^\omega$ and $y'=s^\omega y s^\omega$. 
	We remark that the sub-semigroup generated by $\{s^\omega, x', y'\}$
	is a monoid with $s^\omega$ as its neutral element. Further, it is non-commutative
	because $x'y' \neq y'x'$. Hence, its evaluation problem has $\Omega(\log
        n)$ space complexity,
	and thus the evaluation problem of $S$ also has $\Omega(\log n)$ space
        complexity
        by closure under sub-semigroups (\cref{closure_mono}).
\end{proof}

\begin{example}
\label{ex:sg_lower_bound_aba_aca}

The semigroup $S(a^* b a^+ c a^*)$ has elements $\{a, b, c, ba, ac, bac, 0\}$,
	with $a$ and $0$ being the only idempotents.  It does not satisfy the
	equation $s^\omega x s^\omega y s^\omega = s^\omega y s^\omega x
	s^\omega$ when taking $s = a$, $x = b$, and $y = c$: the right side
	evaluates to $0$ while the left side evaluates to $bac$.  The elements
	$a$, $ba$, and $ac$ generate a local submonoid $\{a, ba, ac, bac, 0\}$,
	which is not commutative.  Hence, by~\cref{lem:lcom}, 
        the out-of-order evaluation problem
                        for this semigroup has $\Theta(\log n)$ space
                        complexity.

			Note that we can also devise an $O(\log n)$ space
                        algorithm for this semigroup
by observing that in any word, at most one
	element containing $b$ (i.e., $b$, $ba$, or $bac$) and at most one
	element containing $c$ (i.e., $c$, $ac$, or $bac$) can appear.  If there
        are more, or if there is a $0$, then the result is~$0$. If there are one
        or less of each of these positions, then all
	remaining positions are necessarily filled with $a$'s.
        The algorithm can therefore store the
	exact positions of the elements containing $b$ and of the elements
        containing $c$ (and any $0$ element if present), rejecting if more than
        two elements are memorized, and otherwise 
	computing the product of the memorized elements according to the order
        of their positions, filling
	gaps with $a$'s as needed.  Since there are only a constant number of
	positions to memorize and each position can be represented with $O(\log
	n)$ bits, this gives an $O(\log n)$ space upper bound.
                        Thus, the out-of-order evaluation problem
                        for $S(a^* b a^+ c a^*)$ has $\Theta(\log n)$ space
                        complexity.
\end{example}

However, the space complexity of out-of-order evaluation for semigroups does not
exactly follow the classification on monoids, and new phenomena can arise. For
instance:

\begin{propositionrep}
	\label{ex:asigmab}
  Let $L$ be the language $a\Sigma^* b$.
   The out-of-order evaluation problem for the syntactic semigroup
   $S$ of~$L$
   has $O(1)$ space complexity.
\end{propositionrep}

\begin{proofsketch}
  We just memorize the element at position 1 and the element at position~$n$ for
  $n$ the word length, using constant space; and we return their product.
\end{proofsketch}

\begin{proof}
   Let us give an explicit description of $S$.
   It consists of four elements, which we will write as $(x,y)$ for $x,y\in
   \{a,b\}$, with the composition law 
   $(x, y)\cdot (x', y') = (x, y')$. 
   The letter $a$ corresponds to the element $(a,a)$ and the letter $b$ to the
   element $(b,b)$, and we accept the words mapped to~$(a,b)$.

   The algorithm for out-of-order evaluation for this semigroup is defined as
   follows, on words of length~$n$.
   We simply memorize the letter $x$ at position $1$ and the letter $y$ at
   position $n$. At the end, we return the semigroup element $(x,y)$.
   This clearly uses
   constant memory and is correct because we know that the entire word would be
   mapped to the same semigroup element: this is thanks to the fact that the
   semigroup satisfies the equation $xzy=xy$.
\end{proof}

This example shows that, for out-of-order semigroup evaluation, it may be useful
to remember the first and last positions of the overall word of semigroup
elements. 
This is not helpful for monoids, intuitively because the first and last
positions can always contain the neutral element.
The class of semigroups whose evaluation only depends on the leftmost and
rightmost elements is a well-studied variety called
$\mathbf{Li}$ and defined by the equation $\llbracket x^\omega y x^\omega =
x^\omega \rrbracket$, or equivalently
$\mathbf{Li} = \llbracket x^\omega y z^\omega = x^\omega z^\omega\rrbracket$.

The main result which we show in this section is the following:

\begin{theorem} \label{low_log_sem}
	Let $S$ be any fixed semigroup. Then the space complexity of
        out-of-order semigroup
  evaluation for~$S$ is as follows:
  \begin{itemize}
    \item If $S \in \mathbf{Li} \lor \mathbf{Com}$, then it is $O(1)$
    \item Otherwise, it is in $\Omega(\log(n))$
  \end{itemize}
\end{theorem}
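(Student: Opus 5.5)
The proof splits into an upper bound for $\mathbf{Li} \lor \mathbf{Com}$ and a lower bound for every other semigroup.

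\textbf{Upper bound.} By the semigroup analogue of \cref{closure_mono}, the class of semigroups admitting $O(1)$-space algorithms is a variety. So it suffices to handle $\mathbf{Li}$ and $\mathbf{Com}$ separately. For $\mathbf{Com}$, the analogue of \cref{com_const} applies: we feed each streamed element to an accumulator in arbitrary order. For $S \in \mathbf{Li}$ with idempotent power $\omega$, the equation $x^\omega y z^\omega = x^\omega z^\omega$ implies, by the standard pumping argument, that the product of any word of length greater than $2k$ (with $k$ depending only on $|S|$) is determined by its prefix and suffix of length $k$. The algorithm therefore stores the elements received at positions $1,\ldots,k$ and $n-k+1,\ldots,n$, which is constant space since $n$ is given with each symbol. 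At the end it returns the product of prefix and suffix, after justifying that the middle part can be replaced. If $n \le 2k$, we store the whole word.

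Concretely, the justification writes the prefix $p = p_- e p_+$ with $e$ idempotent, using pigeonhole on the products of prefixes as in \cref{clm:usepump}, and does the same for the suffix. We then apply $e\,y\,f = e f$ after first turning $ey$ into an expression with $e$ on both sides; the equation $x^\omega y x^\omega = x^\omega$ together with the two-sided form handles this.

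\textbf{Lower bound.} We assume $S \notin \mathbf{Li} \lor \mathbf{Com}$ and want $\Omega(\log n)$. If $S$ is not locally commutative, \cref{lem:lcom} already concludes. The main obstacle is the remaining case, $S \in \mathbf{LCom}$ but $S \notin \mathbf{Li}\lor\mathbf{Com}$. For this case I plan to use an equational basis of $\mathbf{Li} \lor \mathbf{Com}$ and build a fooling set from a violated equation, as in \cref{mon_log_lower}.

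A natural candidate basis is local commutativity together with an equation expressing that, far from both ends, letters commute. One such form is
\[
x^\omega\, y z\, t^\omega = x^\omega\, z y\, t^\omega ,
\]
possibly strengthened to $x^\omega y s z t^\omega = x^\omega z s y t^\omega$. The point is that $\mathbf{Li}\lor\mathbf{Com}$ semigroups are exactly those whose product depends only on a bounded prefix, a bounded suffix, and commutative (threshold/modular) counts of the whole word. I would prove this characterization by exhibiting a congruence: two words are equivalent when they have the same length-$k$ prefix and suffix and the same letter counts up to threshold and modulus. This congruence is the intersection of an $\mathbf{Li}$-congruence and a $\mathbf{Com}$-congruence, so its quotients lie in the join. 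Showing that the equation forces $S$ to be refined by this congruence is a Simon/Tesson-style argument, and it is where the real work lies.

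Given a violation $x^\omega y z t^\omega \neq x^\omega z y t^\omega$, the fooling set is built as follows. Fix a length with a long $x^\omega$ prefix region and a long $t^\omega$ suffix region. In the middle, the fixed domain holds $y$ at one of $n$ possible slots, with $x$-idempotent padding realized through appropriate powers (this is needed because there is no neutral element). The complementary witness places $z$ at a slot interleaved with those of the $y$'s and fills the rest with padding. Then $w_i\circ v_{ij}$ evaluates to the $yz$ order and $w_j \circ v_{ij}$ to the $zy$ order, giving $n$ distinguishable partial words and hence $\Omega(\log n)$ by \cref{thm:fooling_lower_bound}.

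The delicate point in this construction is the padding. Without a neutral letter, the filler between the $y$ and $z$ slots changes the product. Local commutativity lets us use $x^\omega$-type idempotents, with $y$ and $z$ conjugated as $e y e$ and $e z e$, so that extra occurrences of $e$ collapse. Making this compatible with the chosen equation is what fixes its precise form.
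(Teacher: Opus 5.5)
Your upper bound is the paper's argument, and your plan for the lower bound matches the paper's in its first half but has a genuine gap in the fooling-set construction. On the upper bound: closure of the $O(1)$ class under variety operations, the commutative accumulator, and a stored bounded prefix and suffix for $\mathbf{Li}$. On the characterization: proving that $s^\omega xy\,t^\omega=s^\omega yx\,t^\omega$ defines $\mathbf{Li}\lor\mathbf{Com}$, via the intersection of a prefix/suffix congruence with a threshold-and-modulus counting congruence, is exactly the paper's proof of \cref{licom_eqs}. You leave the refinement step open; the paper does it by pumping idempotents into the stored prefix and suffix and then sorting the middle with the contextual form of \cref{lem:addctx}.

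\textbf{Why the fooling set fails.} The gap is in the locally commutative case. In your notation write $e=x^\omega$, $f=t^\omega$. As you describe it ($z$ placed next to the slot of $y$, $e$-padding in the middle), the two compositions are $\cdots e\,yz\,e\cdots e\,f\cdots$ and $\cdots e\,z\,e\cdots e\,y\,e\cdots e\,f\cdots$. By local commutativity the second equals $\cdots e\,y\,e\,z\,e\cdots f\cdots$. So your construction tests whether an $e$ between $y$ and $z$ matters, i.e.\ \eqref{eq:licom2}, not the violated equation. With the conjugated letters $eye$, $eze$ it tests nothing at all, since $e\,(eye)\,(eze)\,e=e\,(eze)\,(eye)\,e$. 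It therefore fails whenever the violation comes only from how $e$ meets the final $f$-region. For example, $S(a^*bc^*)$ is locally commutative, satisfies \eqref{eq:licom2}, and has $a\,bc\,c=b\neq 0=a\,cb\,c$ (take $x=a$, $y=b$, $z=t=c$). Yet every composition in which $b$ is followed by middle padding $a$ contains a factor $ba$ or $ca$ and evaluates to $0$.

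\textbf{The missing idea.} The paper splits the equation into the three identities \eqref{eq:licom1}--\eqref{eq:licom3} and builds one fooling set per violated identity. In the paper's notation put $e=s^\omega$, $f=t^\omega$.
\begin{itemize}
\item \eqref{eq:licom3} is \cref{lem:lcom}.
\item For \eqref{eq:licom2}, put $x$ at position $2i-1$, $y$ at position $2j$, and $e$ everywhere else. This compares $exeye$ with $exye$; it is essentially your interleaving, aimed at the right identity.
\item For \eqref{eq:licom1}, use $w_i=(e\,\ept)^i(f\,\ept)^{n-i}$. The witness has $e$ on the even positions before $2i$, $x$ at $2i$, and $f$ on the even positions from $2j$ on.
\end{itemize}

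\textbf{Your padding worry returns in the \eqref{eq:licom1} case.} The two paddings interleave between positions $2i$ and $2j$. If the witness also puts $f$ there, as in the paper's proof of \cref{st_swap}, then $w_j\circ v_{ij}$ evaluates to $ex(ef)^{j-i}$. For $j>i+1$ this is not $exef$ in general, and it can even equal $exf$.

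The fix is to treat \eqref{eq:licom1} only after \eqref{eq:licom2} is known to hold. Then $ex(ef)^m=exf(ef)^{m-1}=exfef$ for all $m\ge 2$. Since $exef\neq exf$, at least one of $exfef\neq exf$ and $exfef\neq exef$ holds.
\begin{itemize}
\item If $exfef\neq exf$, put $f$ on the even positions strictly between $2i$ and $2j$. This compares $exf$ with $ex(ef)^{j-i}\in\{exef,exfef\}$.
\item If $exfef\neq exef$, put $e$ there instead. This compares $exf(ef)^{j-i-1}\in\{exf,exfef\}$ with $exef$.
\end{itemize}
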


We prove this result in the rest of the section. We first show the first point
(the upper bound), before showing the second point (the lower bound). We note
that the second point only gives a lower bound, which is not tight relative to
our linear space upper bound (\cref{word_upper_n}); we come back to this issue in
\cref{sec:horror}.

\subparagraph*{Upper bound.}
We first prove the upper bound for the semigroups in $\mathbf{Li}$ by relying
on its equational description and a simple pumping argument.

\begin{toappendix}
  In preparation for the proof of \cref{sem_li_const}, we give the pumping lemma
  that we will use:

\begin{claim}
	\label{find_idem}
	Let $S$ be a semigroup of size $k$, and let $\omega$ be the idempotent
        power of~$S$. Let $\pi_S\colon S^+ \to S$ be the canonical morphism.
	Let $w$ be a word of length $k+1$ in $S^*$.
	There exists three words $w_0,w_1,w_2 \in S^*$ such that
        $w = w_0 w_1 w_2$ and
	$\pi_S(w) = \pi_S(w_0)\pi_S(w_1)^\omega\pi_S(w_2)$. 
\end{claim}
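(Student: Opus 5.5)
The plan is a pigeonhole argument on the prefix products of $w$, mirroring the argument already used in the proof of \cref{clm:usepump}. Write $w = a_1 a_2 \cdots a_{k+1}$ with $a_i \in S$. Consider the $k+1$ nonempty prefixes $p_i \coloneq a_1 \cdots a_i$ for $1 \leq i \leq k+1$, and their images $\pi_S(p_i) \in S$. Since $|S| = k$, two of them coincide. So there are indices $1 \leq i < j \leq k+1$ with $\pi_S(p_i) = \pi_S(p_j)$. We then set $w_0 \coloneq a_1 \cdots a_i$, $w_1 \coloneq a_{i+1} \cdots a_j$ and $w_2 \coloneq a_{j+1} \cdots a_{k+1}$, so that $w = w_0 w_1 w_2$. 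Here $w_0$ and $w_1$ are nonempty, while $w_2$ may be empty.

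The key identity is $\pi_S(w_0) = \pi_S(w_0)\pi_S(w_1)$, which is exactly the equality $\pi_S(p_i) = \pi_S(p_j)$. Injecting this identity into itself $\omega$ times, as in \cref{clm:usepump}, gives
\[
  \pi_S(w_0)\pi_S(w_1)^\omega = \pi_S(w_0) = \pi_S(w_0)\pi_S(w_1).
\]
Here $\omega \geq 1$ is the idempotent power, and $\pi_S(w_1)^\omega$ is defined because $w_1$ is nonempty. Multiplying on the right by $\pi_S(w_2)$ then yields $\pi_S(w) = \pi_S(w_0)\pi_S(w_1)\pi_S(w_2) = \pi_S(w_0)\pi_S(w_1)^\omega\pi_S(w_2)$, as required.

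The only real subtlety is bookkeeping about empty words, since $\pi_S$ is only defined on $S^+$. Choosing prefixes of length at least $1$ keeps $w_0$ nonempty, and $i<j$ keeps $w_1$ nonempty. If $w_2$ is empty, we simply read the statement with the factor $\pi_S(w_2)$ omitted, or equivalently we work in $S^1$ with $\pi_S(\varepsilon) = 1$. Apart from this convention, there is no genuine obstacle.
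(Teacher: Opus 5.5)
Your proposal is correct and matches the paper's own proof: you apply pigeonhole to the images of the $k+1$ nonempty prefixes, then iterate $\pi_S(w_0)=\pi_S(w_0)\pi_S(w_1)$ to replace $\pi_S(w_1)$ by its idempotent power. You also handle a possibly empty $w_2$ explicitly, a bookkeeping point the paper leaves implicit.
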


\begin{proof}
  Consider the images by~$\pi_S$ of all non-empty prefixes of~$w$. As $w$ has length $k+1$
  and there are $k+1$ non-empty prefixes but only $k$ different elements in~$S$,
  we know that two prefixes must have the same image. Formally, there are $1
  \leq i < j \leq k+1$ such that, letting $w_0$ be the prefix of length $i$
  of~$w$
  and defining $w_1$ such that $w_0 w_1$ is the prefix of length $j$ of~$w$
  and $w_2$ such that $w = w_0 w_1 w_2$, we
  have $\pi_S(w_0) = \pi_S(w_0 w_1)$. Hence, 
  $\pi_S(w_0) = \pi_S(w_0) \pi_S(w_1)$.
  Injecting the equation in itself $\omega-1$ times, we get:
  $\pi_S(w_0) = \pi_S(w_0) \pi_S(w_1)^\omega$.
  Together with the previous equation, we have:
  $\pi_S(w_0) \pi_S(w_1) = \pi_S(w_0) \pi_S(w_1)^\omega$.
  Thus, $\pi_S(w) = \pi_S(w_0) \pi_S(w_1) \pi_S(w_2) = \pi_S(w_0)
  \pi_S(w_1)^\omega \pi_S(w_2)$, which is what we wanted to show.
\end{proof}

We also justify the alternative equational characterization of
$\mathbf{Li}$ which was given without proof in the main text:

\begin{claim}[\cite{Pin2025Automata}]
  \label{pin}
	The following two equations generate the same variety of semigroups
        $\mathbf{Li}$:
	$ \llbracket x^\omega y x^\omega = x^\omega \rrbracket = 
	\llbracket x^\omega y z^\omega = x^\omega z^\omega \rrbracket$
\end{claim}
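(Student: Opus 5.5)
The plan is to prove the two inclusions separately, by direct equational manipulation in an arbitrary finite semigroup $S$ with idempotent power $\omega$. Throughout, I will write $e \coloneq x^\omega$ and $f \coloneq z^\omega$; these are idempotents, and $e^\omega = e$, $f^\omega = f$.

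\emph{From $\llbracket x^\omega y z^\omega = x^\omega z^\omega \rrbracket$ to $\llbracket x^\omega y x^\omega = x^\omega\rrbracket$.} This direction is immediate. Specialise $z \coloneq x$ to get $x^\omega y x^\omega = x^\omega x^\omega$, and conclude with $x^\omega x^\omega = x^\omega$ by idempotence.

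\emph{From $\llbracket x^\omega y x^\omega = x^\omega\rrbracket$ to $\llbracket x^\omega y z^\omega = x^\omega z^\omega \rrbracket$.} Here we must show $e y f = e f$ for all $x,y,z \in S$. Since $e$ is idempotent, the hypothesis applied with base $e$ gives $e u e = e$ for every $u \in S$, and similarly $f u f = f$. The key steps, in order, are as follows.
\begin{enumerate}
\item Taking $u \coloneq f$ gives $e f e = e$, and symmetrically $f e f = f$.
\item The element $g \coloneq e f$ is idempotent. Indeed, $g g = e (f e f) = e f = g$.
\item Rewrite $e y f$ using step~1 on both sides:
\[ e y f = (e f e)\, y\, (f e f) = (e f)(e y f)(e f) = g\,(e y f)\,g. \]
\item Since $g$ is idempotent we have $g^\omega = g$, so the hypothesis applied with $x \coloneq g$ and $y \coloneq e y f$ turns $g (e y f) g$ into $g = e f$.
\end{enumerate}
This yields $x^\omega y z^\omega = x^\omega z^\omega$, so the two equations define the same variety.

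The main obstacle is finding a common idempotent that "sandwiches" $e y f$, because the first equation only allows collapsing between two copies of the same idempotent. The fact that $ef$ is idempotent, which follows from $fef = f$, is exactly what resolves this. The rest is routine, and this argument is the standard one for $\mathbf{Li}$; see~\cite{Pin2025Automata}.
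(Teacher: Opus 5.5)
Your proof is correct and uses the same equational idea as the paper: substitute $x^\omega = x^\omega z^\omega x^\omega$, then collapse a factor sandwiched between two copies of an idempotent. The paper's chain is shorter, $x^\omega y z^\omega = x^\omega z^\omega (x^\omega y) z^\omega = x^\omega z^\omega$, because it uses $z^\omega$ itself as the sandwiching idempotent, so your detour through $fef = f$ and the idempotency of $ef$ is not needed.
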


\begin{proof}
	Let $S$ be a semigroup.
Assume that for all $x,y \in S$ we have $x^\omega y x^\omega = x^\omega$. Then 
$x^\omega y z^\omega = x^\omega z^\omega x^\omega y z^\omega = x^\omega
z^\omega$, where the first equality is by replacing $x^\omega$ by $x^\omega
z^\omega x^\omega$ thanks to the equation that we assumed, and the second is by
replacing $z^\omega x^\omega y z^\omega$ by $z^\omega$ using the same equation.

Conversely, assume that for all $x,y,z \in S$ we have $x^\omega y z^\omega = x^\omega z^\omega$. 
Taking $z \coloneq x$, we obtain $x^\omega y x^\omega = x^\omega x^\omega = x^\omega$
which gives the first equation.
\end{proof}

We can now state and show \cref{sem_li_const}:
\end{toappendix}

\begin{lemmarep}\label{sem_li_const}
Fix a semigroup $S \in \mathbf{Li}$,
then its out-of-order evaluation problem has $O(1)$ space complexity.
\end{lemmarep}

\begin{proofsketch}
	Let $k = |S|$, store the first $k+1$ and last $k+1$ letters.
At the end return the product in order of all the stored elements. This
algorithm is in constant space.
	The correctness follows by using a pumping technique on the 
        first and last $k+1$
	elements to construct idempotent elements and then using the equation
	$\mathbf{Li}  = \llbracket x^\omega y z^\omega = x^\omega z^\omega\rrbracket$
	to conclude that the product of the stored elements is equal to the entire product.
\end{proofsketch}

\begin{proof}
Let $S$ be a finite semigroup, and let $k \coloneq |S|$.
Consider an out-of-order streaming sequence of length $n$ over $S$
for a word
$s = s[0] s[1] \cdots s[n-1]$.
The algorithm works as follows:
store the first $k+1$ and last $k+1$ letters.
At the end return the product in order of all the stored elements. This
algorithm is in constant space, all that remains is to show that it is correct.

The correctness claim is immediate if the word length is $2k+2$ or less.
Otherwise, 
writing $s = s_- s' s_+$ 
with $|s_-| = k+1$
and $|s_+| = k+1$ the prefix and suffix that we have stored,
we apply \cref{find_idem} on $s_-$ and on $s_+$
and find 
two idempotent elements $x^\omega$ and $y^\omega$ such that, writing
$\pi_S$ the canonical morphism,
we have $\pi_S(s_-) = x_- x^\omega x_+$ for some $x_-,x_+ \in S$ and likewise
we have $\pi_S(s_+) = y_- y^\omega y_+$ for some $y_-,y_+ \in S$. Hence,
$\pi_S(s) = \pi_S(s_-) \pi_S(s') \pi_S(s_+) = 
x_- x^\omega x_+
\pi_S(s')
y_- y^\omega y_+$. Now, we use the characterization of $\mathbf{Li}$ in
\cref{pin} as $\mathbf{Li}  = \llbracket x^\omega y z^\omega = x^\omega z^\omega
\rrbracket$ to argue that the right-hand-side is equal to 
$x_- x^\omega x_+ y_- y^\omega y_+$, i.e., it is the product $\pi_S(s_-)
\pi_S(s_+)$ of the stored prefix and suffix $s_-$ and $s_+$, which concludes.
%
\end{proof}

Together with the upper bound on commutative semigroups that follows from the
immediate generalization to semigroups of \cref{com_const}, and thanks to the
closure under semigroup operations by the analogue of \cref{closure_mono} for
semigroups, we deduce that semigroups in $\mathbf{Li} \lor \mathbf{Com}$ have
$O(1)$ space out-of-order semigroup evaluation. This establishes the upper
bound of~\cref{low_log_sem}. 


\subparagraph*{Lower bound.}
We now show that all other semigroups require logarithmic space, which completes
the proof of~\cref{low_log_sem} by showing the second point.

For this, we first give an 
equational characterization of the variety
$\mathbf{Li}\lor\mathbf{Com}$. 
This relates to the general study of semigroups 
of the form $\mathbf{V}\lor\mathbf{Li}$, with a general methodology to
understand these varieties given in~\cite{grosshans2021note}. 
We will prove the
equation $s^\omega xy t^\omega = s^\omega yx t^\omega$, and will in fact give an
alternative characterization via three equations. Each of these equations will
then be used to give a separate lower bound using fooling set constructions.

\begin{toappendix}
  In preparation for the proof of our characterization of $\mathbf{Li} \lor
  \mathbf{Com}$, we need to show that two equations define the same variety
  (which will in fact be $\mathbf{Li} \lor \mathbf{Com}$, as we will show):

\begin{lemmarep}
  \label{lem:addctx}
	The varieties of semigroups generated by these two equations are equivalent.
    $\llbracket s^\omega x y t^\omega = s^\omega y x t^\omega \rrbracket
    = \llbracket s^\omega\, u\, x y\, v \, t^\omega = s^\omega\, u \, y x\, v \, t^\omega \rrbracket$
\end{lemmarep}

\begin{proof}
	Assume 
	\begin{equation}
	\label{small}
		s^\omega\, x y\, t^\omega = s^\omega\, y x\, t^\omega
	\end{equation} then we have:
	\begin{align*}
		s^\omega\, u\, x y\, v 								\, t^\omega
		={}& s^\omega\, (s^\omega u)\, (x y v t^\omega) 		\, t^\omega & \text{by associativity and idempotence}\\
		={}& s^\omega\, (x y v t^\omega)\, (s^\omega u) 		\, t^\omega & \text{by \cref{small}}\\
		={}& s^\omega\, (x y) \, (v t^\omega) \, s^\omega \, u 		\, t^\omega & \text{by associativity}\\
		={}& s^\omega\,  (v t^\omega)\, (x y) \, s^\omega \, u 		\, t^\omega & \text{by \cref{small}}\\
		={}& s^\omega\, v \, t^\omega \, (x) (y) \, s^\omega \, u			\, t^\omega & \text{by associativity}\\
		={}& s^\omega\, v \, t^\omega \, (y) (x) \, s^\omega \, u			\, t^\omega & \text{by \cref{small}}\\
		={}& s^\omega\, (v t^\omega) \, (y x) \, s^\omega \, u			\, t^\omega  & \text{by associativity}\\
		={}& s^\omega\, (y x)\,(v t^\omega) \, s^\omega \, u			\, t^\omega  & \text{by \cref{small}}\\
		={}& s^\omega\, (y x v t^\omega) \, (s^\omega u) 		\, t^\omega  & \text{by associativity}\\
		={}& s^\omega\, (s^\omega u) \, (y x v t^\omega) 		\, t^\omega & \text{by \cref{small}}\\
		={}& s^\omega\, u \, y x\, v 								\, t^\omega & \text{by associativity and idempotence}
	\end{align*}

        Now assume 
		\begin{equation} \label{big}
			s^\omega\, u\, x y\, v\, t^\omega = s^\omega\, u\, y x\, v\, t^\omega
		\end{equation}
		Then we have:
	\begin{align*}
		s^\omega\, x y\, 								\,t^\omega
		={}& s^\omega\, s^\omega\, x y t^\omega		\, t^\omega &\text{by idempotence}\\
		={}& s^\omega\, s^\omega\, y x t^\omega		\, t^\omega &\text{by \cref{big} with $u := s^\omega$ and $v := t^\omega$}\\
                ={}& s^\omega\, y x \, t^\omega & & \hfill \qedhere\null
	\end{align*}
\end{proof}

We can now prove \cref{licom_eqs}:
\end{toappendix}

\begin{lemmarep}\label{licom_eqs}
  The following conditions are equivalent
   \begin{enumerate}
      \item The semigroup $S$ belongs to $\mathbf{Li}\lor \mathbf{Com}$.
      \item The semigroup $S$ satisfies the equation $s^\omega xy t^\omega = s^\omega yx t^\omega$.
      \item The semigroup $S$ satisfies the three equations: 
         \begin{align}
             s^\omega x s^\omega t^\omega &= s^\omega x t^\omega 
             \label{eq:licom1} \\
             s^\omega x s^\omega y s^\omega &= s^\omega x y s^\omega
             \label{eq:licom2} \\
             s^\omega x s^\omega y s^\omega &= s^\omega y s^\omega x s^\omega
             \label{eq:licom3}
         \end{align}
	\end{enumerate}
\end{lemmarep}
\begin{proofsketch}
        We can show the equivalence between points 2 and 3 simply by
        equation manipulation, so the key point is to show the equivalence of
        points 1 and 2.
	For the easy $1\to2$ direction we see that the equations for $\mathbf{Li}$ and $\mathbf{Com}$
	both imply the equation of point~2.
	For the $2\to1$ direction, similarly to \cref{fleq}, taking a
        semigroup $S$ satisfying the equation, we define
        one congruence of $\mathbf{Li}$ and one congruence of $\mathbf{Com}$ and
        we show that their intersection refines the congruence defined by~$S$.
        For this, we use pumping 
        to construct idempotents in the
        common prefixes and suffixes (thanks to the $\mathbf{Li}$ congruence),
        then using the equation between the idempotents to argue that all letters
        can be reordered and that we can conclude simply from the $\mathbf{Com}$
        congruence.
\end{proofsketch}
\begin{proof}
	We start by showing the equality between the first and second statements.

	We first show the easy $1\to2$ direction.
	Let $S \in \mathbf{Li}$.
	For all $s,x \in S$ we have $s^\omega x s^\omega = s^\omega$
	and thus for all $s,x,y$ we also have $s^\omega x s^\omega = s^\omega= s^\omega y s^\omega$.
	Using this equation we obtain
	\begin{align*}
		s^\omega xy t^\omega 
		={}& s^\omega t^\omega s^\omega xy t^\omega & \text{by} \quad s^\omega = s^\omega t^\omega s^\omega\\
		={}& s^\omega t^\omega s^\omega yx t^\omega 
		& \text{by} \quad t^\omega s^\omega yx t^\omega = t^\omega s^\omega xy t^\omega\\ 
		={}& s^\omega yx t^\omega & \text{by} \quad s^\omega = s^\omega t^\omega s^\omega
	\end{align*}
        Now, let $S \in \mathbf{Com}$. From $xy=yx$, we immediately get
        $s^\omega xy t^\omega = s^\omega yx t^\omega$. Hence, all
        semigroups in
        $\mathbf{Li} \lor \mathbf{Com}$ satisfy the equation of point 2.

	Now we show the $2\to1$ direction.
	Let $S \in \llbracket s^\omega x y t^\omega = s^\omega y x t^\omega \rrbracket$ be a semigroup, let $k$ be its size.
	We now show that the congruence defined by $S$ is refined by the
        intersection of the $\mathbf{Li}$ congruence $\sim_{Li,k+1}$ and a $\mathbf{Com}$ 
	congruence $\sim_{Com,k!,3\omega}$, noting that $\omega$ divides $k!$.
        More precisely, $\sim_{Li,k+1}$ equates any two words having the same
        $k+1$ first letters and $k+1$ last letters (in particular words of
        length less than $2k+2$ are only equivalent to themselves). It is
        immediate that this defines an equivalence relation, and that it is a
        congruence because when two words $s$ and $s'$ have the same $k+1$
        first and $k+1$ last letters, and likewise $t$ and $t'$, then the same
        is true of $st$ and $s't'$. As for $\sim_{Com,k!,\omega+2k+2}$, it equates any two
        words $u, v$ of~$S^*$ such that, for each element $x \in S$, the
        respective number of
        occurrences $i$ and $j$ of $x$ in $u$ and in $v$ ensure that $i$ and $j$
        have the same remainder modulo $k!$ and that we have
        $\min(i,\omega+2k+2) =
        \min(j,\omega+2k+2)$:
        again this is clearly a congruence. (Note that, in contrast with the
        $RED_{t,p}$ congruences from \cite{tesson} used in the proof of
        \cref{fleq}, our congruence now accounts for a threshold in addition to
        a modulo. This is still commutative, and the reason why it was not
        needed in the proof of \cref{fleq} is that thresholds are intuitively
        handled already by $\mathbf{FL}$, whereas the same is not true of
        $\mathbf{Li}$.)

        So take two words $u$ and $v$ that are equivalent for
        $\sim_{Li,k+1}$ and for $\sim_{Com,k!,\omega+2k+2}$. If one of $u$ and $v$ has
        length $<2k+1$, then $u$ and $v$ must be identical (thanks to the first
        congruence) and then they obviously have the same product in~$S$ so
        there is nothing to show. Now, if $u$ and $v$ both have length $\geq
        2k+2$, thanks to the first congruence
        we can write $u = s u' t$ and $v = s v' t$  with $s, t$ having
        length $k+1$. 
        Using \cref{find_idem} like in the proof of \cref{sem_li_const}, 
        letting $\pi_S \colon S^+ \to S$ be the canonical morphism,
        we can write $u = s_- s' s_+ u' t_- t' t_+$ and 
        $v = s_- s' s_+ v' t_- t' t_+$ such that $\pi_S(u) = \pi_S(s_-)
        \pi_S(s')^\omega \pi_S(s_+)
        \pi_S(u') \pi_S(t_-) \pi_S(t')^\omega \pi_S(t_+)$ and  likewise
        $\pi_S(v) = \pi_S(s_-) \pi_S(s')^\omega \pi_S(s_+) \pi_S(v') \pi_S(t_-)
        \pi_S(t')^\omega \pi_S(t_+)$.

        The equation assumed in point~2 is equivalent, by \cref{lem:addctx}, to:
        \[s^\omega\, u\, x y\, v \, t^\omega = s^\omega\, u \, y x\, v \,
        t^\omega\]
        Thus, in the expressions for $\pi_S(u)$ and $\pi_S(v)$ given above,
        between $\pi_S(s')^\omega$ and $\pi_S(t')^\omega$, we can commute
        everything without changing the product in~$S$. For this reason, we can
        substitute $\pi_S(u')$ by $\pi_S(u'')$ for $u'' \coloneq \prod_{m \in S} m^{i_m}$
        and substitute $\pi_S(v')$ by $\pi_S(v'')$ for $v''
        \coloneq \prod_{m \in S} m^{j_m}$. Further, we know that $u$ and $v$ are
        $\sim_{Com,k!,\omega+2k+2}$-equivalent, so the same is true of $u'$ and $v'$ (up
        to eliminating the common prefix and suffix). This implies that, for
        each $m \in S$, the values $i_m$ and $j_m$ have the same remainder
        modulo $k!$. Further, from the threshold information, if $i_m < \omega$
        or $j_m < \omega$ then we must have $i_m = j_m$. This uses the fact that
        there are at most $2k+2$ occurrences of~$m$ in $u$ outside of $u'$, and
        in $v$ outside of~$v'$. So the only point to argue is that we can
        replace the $i_m$ by $j_m$ when $i_m \geq \omega$ and $j_m \geq \omega$
        in~$u''$ without changing the product in~$S$ of $\pi_S(u)$: this is
        simply because $m^{i_m} = m^\omega m^{i_m - \omega} = m^\omega m^{i_m
        \bmod \omega} = m^\omega m^{j_m \bmod \omega} = m^{j_m}$. Doing this for
        all letters, we finally conclude that indeed $\pi_S(u) = \pi_S(v)$. So
        we have established that the intersection of the two congruences
        $\sim_{Li,k+1}$ and $\sim_{Com,k!,\omega+2k+2}$ indeed refines the
        congruence defined by~$S$. Hence, the latter congruence has equivalence
        classes which are unions of classes of the intersection; by similar
        reasoning to the proof of \cref{fleq} we deduce that $S$ is a quotient
        of the product of a semigroup of~$\mathbf{Li}$ and a semigroup
        of~$\mathbf{Com}$.

        \medskip

	We then show the equality between point 2 and point 3.

	For the easy $2\to3$ direction, let $S$ be a semigroup.
	Assume $S \in \llbracket s^\omega x y t^\omega = s^\omega y x t^\omega
        \rrbracket$.
	Then:
	\begin{enumerate}
	    \item Let $y \coloneq s^\omega$ we have $s^\omega x s^\omega t^\omega = 
		    s^\omega s^\omega x t^\omega = s^\omega x t^\omega$
		which is equation \eqref{eq:licom1}.
		  
	    \item Let $t \coloneq s^\omega$ we have $s^\omega x s^\omega y s^\omega 
		    = s^\omega y x s^\omega s^\omega = s^\omega x y s^\omega$
		which is equation \eqref{eq:licom2}.

	    \item Let $t \coloneq x$ we have $s^\omega x s^\omega y s^\omega 
		    = s^\omega s^\omega x s^\omega y s^\omega = s^\omega y s^\omega x s^\omega$
		which is equation \eqref{eq:licom3}.
	\end{enumerate}

	For the $3\to2$ direction,
	assume that $S$ satisfies all three equations
        \eqref{eq:licom1}--\eqref{eq:licom3}. We show that the equation of
        point~2 holds. Indeed,
	\begin{align*}
	s^\omega x y t^\omega 
	  &= s^\omega x y s^\omega t^\omega & \text{by \eqref{eq:licom1}} \\
	  &= s^\omega x s^\omega y s^\omega t^\omega & \text{by \eqref{eq:licom2}} \\
	  &= s^\omega y s^\omega x s^\omega t^\omega & \text{by \eqref{eq:licom3}} \\
	  &= s^\omega y x s^\omega t^\omega & \text{by \eqref{eq:licom2}} \\
          &= s^\omega y x t^\omega & \text{by \eqref{eq:licom1}} & \null \hfill \qedhere
	\end{align*}
\end{proof}

The lemma above gives an equational characterization of $\mathbf{Li} \lor
\mathbf{Com}$ with the equation $s^\omega xy t^\omega = s^\omega yx t^\omega$,
intuitively implying that everything commutes between two (possibly different) idempotents. However, to show the lower bound part of \cref{low_log_sem},
we will use the three equations of the third point. Namely, for any semigroup
not in $\mathbf{Li} \lor \mathbf{Com}$, we will see which equation among the
three is violated, and use a different proof in each case.

We first notice that the case of a
violation of the third equation $s^\omega x s^\omega y s^\omega = s^\omega y
s^\omega x s^\omega$ in fact precisely corresponds to the case of a
non-commutative semigroup, and the corresponding lower bound was already proven
as
\cref{lem:lcom}. Hence, only the first and second equations remain. 
We give the construction for the first equation:

%
\begin{lemmarep}\label{st_swap}
	For any fixed semigroup $S \not\in \llbracket s^\omega x s^\omega t^\omega = s^\omega x t^\omega \rrbracket$,
	the out-of-order evaluation problem for~$S$ has $\Omega(\log(n))$ space complexity.
\end{lemmarep}

\begin{proofsketch}
	We invoke \cref{thm:fooling_lower_bound} using the following
        $(n-2,2n)$-fooling set for any $n > 0$.
	For each pair of indices $i,j \in \{2, \dots, n-1\}$ define 
	$$w_i = (s^\omega \ept)^i \, (t^\omega \ept)^{\,n-i}
        \quad \text{and} \quad v_{ij} = (\ept s^\omega)^{i-1} \, (\ept x) \, (\ept t^\omega)^{\,n-i}$$
	Thus the out-of-order evaluation problem for $S$ has $\Omega(\log(n))$ complexity.
\end{proofsketch}

\begin{proof}
	Let $S$ be a finite semigroup such that there exist $s,x,t \in S$ with 
	$s^\omega x s^\omega t^\omega \neq s^\omega x t^\omega$.
	Let $n \in \mathbb N$. 
	Define the following $(n-2,2n)$-fooling set.
	For $i \in \{2, \dots, n-1\}$ define 
	$w_i \coloneq (s^\omega \ept)^i \, (t^\omega \ept)^{\,n-i}$.
	For any two indices $i<j$, define the \textit{complementary witness} 
	$v_{ij} \coloneq (\ept s^\omega)^{i-1} \, (\ept x) \, (\ept t^\omega)^{\,n-i}$.
	This satisfies $w_i \circ v_{ij} = s^\omega x t^\omega \not = 
	s^\omega x s^\omega t^\omega = w_j \circ v_{ij}$.
	By \cref{thm:fooling_lower_bound},
	the out-of-order evaluation problem for $S$ has $\Omega(\log(n))$ complexity.
\end{proof}

\begin{example}
	\label{ex:sg_lower_bound}
	The semigroup $S(a^*bc^*)$ has four elements $\{a, b, c, 0\}$, with $a$, $c$, and $0$ being idempotent.  It does
			not satisfy the equation $s^\omega x s^\omega
			t^\omega = s^\omega x t^\omega$ when taking $s=a$, $x=b$,
			$t=c$: the left-hand side evaluates to $abac = 0$
			and the right-hand side evaluates to $abc = b$. Hence,
			its out-of-order evaluation problem requires
			$\Omega(\log n)$ space by~\cref{st_swap}.  
		
			Note that we can also devise an $O(\log n)$ space
                        algorithm for
			this semigroup by storing the maximal occurrence of $a$,
			the minimal occurrence of $c$,
                        the minimal occurrence of~$b$,
                        and the number of occurrences of
			$0$ and $b$.
                        The result is~$0$ if there is a~$0$ or if there are two
                        or more~$b$. If there is no $b$ the result is $a$ or $c$
                        or $0$ depending on whether $a$ or $c$ or both are
                        present. Last, if there is one $b$, the result is $b$
                        unless the rightmost $a$ is right of the~$b$ or the
                        leftmost $c$ is left of the~$b$, in which case the
                        result is~$0$.
                        Thus, the out-of-order evaluation problem
                        for this semigroup is in $\Theta(\log n)$.
\end{example}

We then give the construction for the second equation:

\begin{lemmarep}\label{xy_sep}
	For any fixed semigroup $S \not\in \llbracket s^\omega x s^\omega y s^\omega = s^\omega x y s^\omega \rrbracket$, 
	the out-of-order evaluation problem for~$S$ has $\Omega(\log(n))$ space complexity.
\end{lemmarep}

\begin{proofsketch}
	We invoke \cref{thm:fooling_lower_bound} using the following $(n-2,2n)$-fooling set 
        for any $n>0$.
	For each pair of indices $i,j \in \{2, \dots, n-1\}$ define 
	$$w_i \coloneq (s^\omega \ept)^{i-1} (x \ept) (s^\omega \ept)^{\,n-i}
        \quad \text{and} \quad v_{ij} = (\ept s^\omega)^{j-1} \, (\ept y) \, (\ept s^\omega)^{\,n-j}$$
	Thus the out-of-order evaluation problem for $S$ has $\Omega(\log(n))$ space complexity.
\end{proofsketch}

\begin{proof}
	Let $S$ be a finite semigroup such that there exist $s,x,y \in S$ with 
	$s^\omega x s^\omega y s^\omega \neq s^\omega x y s^\omega$.
	Let $n \in \mathbb N$. 
	Define the following $(n-2,2n)$-fooling set.
	For $i \in \{2, \dots, n-1\}$ define 
	$w_i = (s^\omega \ept)^{i-1} (x \ept) (s^\omega \ept)^{\,n-i}$.
	For any two indices $i<j$, define the \textit{complementary witness} 
	$v_{ij} = (\ept s^\omega)^{j-1} \, (\ept y) \, (\ept s^\omega)^{\,n-j}$.
	This satisfies $w_i \circ v_{ij} = s^\omega x s^\omega y s^\omega \not = 
	s^\omega x y s^\omega = w_j \circ v_{ij}$.
	By \cref{thm:fooling_lower_bound},
	the out-of-order membership problem therefore has $\Omega(\log(n))$ space complexity.
\end{proof}


\begin{example}
	\label{ex:sg_lower_bound2}
	The semigroup $S(a^*bba^*)$ has elements $\{a, b, ab, ba,
		bb, 0\}$ and does not satisfy the equation $s^\omega
		x s^\omega y s^\omega = s^\omega x y s^\omega$ when taking
		$s=a$, $x=y=b$: the left-hand side evaluates to
		$ababa = 0$ and the right-hand side to $abba = bb$.
		Hence, its out-of-order evaluation
		problem requires $\Omega(\log n)$ space by
		\cref{xy_sep}.
	
		Note that this is the same underlying language as
		in~\cref{ex:abba}, but considering the syntactic
		semigroup instead of the syntactic monoid. 
		We can also devise an $O(\log n)$ space
		algorithm for this semigroup 
		by storing at
		most two positions for each of the elements $ab$, $ba$,
		$bb$, and $0$. If any of these elements occurs more
		than twice, we can immediately return $0$. Otherwise,
		we know that all positions that we have not stored are
		labeled by $a$, which gives us
		enough information to compute the product.
		Thus, the out-of-order evaluation problem
		for this semigroup is in $\Theta(\log n)$.
\end{example}

We can now conclude the proof of \cref{low_log_sem}: by combining \cref{licom_eqs}, \cref{st_swap}, \cref{xy_sep}, and \cref{lem:lcom}, 
we conclude that any semigroup $S \not\in \mathbf{Li} \lor \mathbf{Com}$ 
must have $\Omega(\log(n))$ space complexity.

We have illustrated that there are novel algorithms for the case of
semigroups which did not exist in the case of monoids.
Yet, the case of out-of-order evaluation for semigroups
is still coarser than out-of-order
membership problem 
for regular languages.
For instance, we can show the following linear semigroup lower bound, which is
in contrast to the $O(1)$ space upper bound for the corresponding language in~\cref{ex:abstar}.

\begin{propositionrep}\label{ex:sg_lower_bound_ab}
	The out-of-order evaluation problem for $S((ab)^*)$ has $\Theta(n)$
        space complexity.
\end{propositionrep}
\begin{proof}
	The upper bound is a direct consequence of~\cref{word_upper_n}.
	To prove the lower bound we will define a fooling set,
	and conclude using \cref{thm:fooling_lower_bound}.

	We build a $(2^n,2n)$-fooling set for any $n>0$.
	For each $\alpha\in\{0,1\}^n$ we define $w_{\alpha,i}$ by either
        $a \ept$ or $(ab) \ept$ depending on the $i$-th bit of $\alpha$. (Here, recall
        that crucially, $ab$ is an element of the semigroup, i.e., it is a
      single element and we always have $|w_{\alpha,i}| = 2$.
        Now, let
	$w_\alpha \coloneq w_{\alpha,1} w_{\alpha,2} \dots w_{\alpha,n}$,
        and let $W_n \coloneq \{w_\alpha \mid \alpha \in \{0,1\}^n\}$ be
        the fooling set.

	For any $\alpha \neq \beta$, we build the complementary witness
	similarly. Namely, we define $v_{\alpha,i}$ as an element
        $\ept (bab)$ or $\ept (ab)$ depending on the $i$-th bit of~$\alpha$: 
        this ensures that $w_{\alpha,i} \circ v_{\alpha,i} =
        abab$.
        Now, we let $v_{\alpha} \coloneq v_{\alpha,1} v_{\alpha,2} \dots
        v_{\alpha,n}$, and we simply let $v_{\alpha,\beta} \coloneq v_\alpha$.
	The construction then ensures that $w_\alpha \circ v_\alpha = (abab)^n$ 
	which corresponds to the semigroup element $ab$. Now, 
	let $\iota(\alpha,\beta)$ be the first position where $w_\beta$ and $w_\alpha$ differ.
	Thus $w_{\beta,\iota(\alpha,\beta)} \circ v_{\alpha,\iota(\alpha,\beta)}$ will be either $aab$ or $abbab$
	which are both the $0$ element in the semigroup and thus $w_\beta \circ
        v_{\alpha,\beta} = 0$.
	This implies that the out-of-order evaluation problem for $S((ab)^*)$ 
	has $\Omega(n)$ space complexity.
\end{proof}

\section{Other Complexity Regimes}
\label{sec:horror}
Our results in the previous section have characterized the semigroups 
enjoying constant-space out-of-order semigroup evaluation; 
with a logarithmic lower bound applying to all other semigroups. 
However, unlike our result for monoids (\cref{thm:monoid} in \cref{sec:monoid}),
we have not given a classification of all possible complexity regimes.

Unfortunately, in our present understanding this appears quite challenging, as
we will now illustrate. For convenience the results presented in this section 
are stated for out-of-order membership to languages, but we believe that the phenomena illustrated here already apply to
out-of-order evaluation for semigroups (beyond the constant-space regime).
Thus, classifying out-of-order semigroup evaluation, or out-of-order membership to
regular languages, appears much more challenging than the monoid case.

We first show that some languages admit unexpected logarithmic-space
algorithms, and then show an example of a language for which we
can devise an $O(\sqrt{n})$-space algorithm.


\subparagraph*{Logarithmic-space examples.}
The first example that we present is the 
language $L = a^* b^* a^*$. First note that the syntactic monoid of~$L$ is not in $\mathbf{FL} \lor \mathbf{Com}$: intuitively, there is no $k
> 0$ for which we can solve out-of-order membership (or evaluation) by
remembering the $k$ first and last occurrences of each letter along with
commutative information. More precisely, the first and last $a$'s are
uninformative, and the first and last $b$'s do not allow us to know whether
there could be some $a$'s surrounded by two $b$'s. So, the out-of-order monoid
evaluation problem for the syntactic monoid of~$L$ has linear space complexity
by \cref{thm:monoid}. 
%
%
By contrast, we can show that out-of-order membership for~$L$ can be done in logarithmic space, with an ad-hoc algorithm:

\begin{propositionrep}
  \label{prp:aba}
	The out-of-order membership problem for $L = a^* b^* a^*$ has $\Theta(\log n)$ space complexity.
\end{propositionrep}

\begin{proofsketch}
  We memorize the minimal position $b_{\min}$ and maximal position $b_{\max}$
  where a $b$ occurs, along with the number
  of~$b$'s. A word belongs to the language precisely when the $b$'s form a
  contiguous interval, which amounts to verifying whether the number of $b$'s is
  equal to the difference between the minimal and maximal positions of~$b$ (plus 1).

	Now, for the lower bound, we construct the following $(n-1,2n)$-fooling set for any $n \in \mathbb N$. 
        For any $1 \leq i \leq n-1$, define
	$w_i \coloneq (a \ept)^i \, (b \ept)^{\,n-i}$.
	Now, for any $i,j \in \{1, \dots, n-1\}$ with $i<j$ define 
        $v_{ij} \coloneq (\ept a)^{j} \, (\ept b)^{\,n-j}$.
	This ensures that $w_i \circ v_{ij}$ is in $a^* (ba)^+ b^+$, i.e., it is
        not in~$L$; whereas $w_j \circ v_{ij}$ is in $a^* b^+$, i.e., it is
        in~$L$.
	By \cref{thm:fooling_lower_bound}, the out-of-order membership problem
        for $a^*b^*a^*$ then has $\Omega(\log(n))$ complexity.
\end{proofsketch}

\begin{proof}
	All we need to check is that the $b$'s form a contiguous interval.
	To do this we compute the following information, which takes
        logarithmic space:
        \begin{itemize}
          \item The minimal position $b_{\min}$ where a $b$ occurs;
          \item The maximal position $b_{\max}$ where a $b$ occurs;
          \item The total number $n_b$ of $b$'s.
        \end{itemize}

	At the end we accept or reject depending on whether the following
        equality holds:
        \[
          b_{\max} - b_{\min} = n_b-1.
        \]
        It is easy to see that this holds precisely when the $b$'s form a
        contiguous interval, concluding the proof.
	
	Now, for the lower bound, we proceed using fooling sets. The argument
        was entirely given in the proof sketch, but we repeat here verbatim for
        convenience. We construct the following $(n-1,2n)$-fooling set for any $n \in \mathbb N$. 
        For any $1 \leq i \leq n-1$, define
	$w_i \coloneq (a \ept)^i \, (b \ept)^{\,n-i}$.
	Now, for any $i,j \in \{1, \dots, n-1\}$ with $i<j$ define 
        $v_{ij} \coloneq (\ept a)^{j} \, (\ept b)^{\,n-j}$.
	This ensures that $w_i \circ v_{ij}$ is in $a^* (ba)^+ b^+$, i.e., it is
        not in~$L$; whereas $w_j \circ v_{ij}$ is in $a^* b^+$, i.e., it is
        in~$L$.
	By \cref{thm:fooling_lower_bound}, the out-of-order membership problem
        for $a^*b^*a^*$ then has $\Omega(\log(n))$ complexity.
\end{proof}

Note that this algorithm does not work for the syntactic monoid of $a^* b^*
a^*$, because we cannot count the neutral elements 
between $b_{\min}$ and $b_{\max}$, rendering the technique useless.
The result generalizes to languages such as $a^* b^* c^* b^* a^*$, by
maintaining the counts and minima and maxima of each letter and checking first
that the $c$'s form a contiguous interval, and then that the $b$'s form a
contiguous interval around the $c$'s (offsetting the count of the $b$'s by that
of the $c$'s).
%

One can wonder whether the result also generalizes to languages of the form $a^*
b^* a^* b^* a^*$. This is not immediate, because now we cannot easily identify
the $a$'s in the middle interval, or the $b$'s in the left interval versus the
right interval. However, it turns out that logarithmic space complexity can be
shown using a more intricate approach (which also could have been used to show
\cref{prp:aba}):

\begin{propositionrep}
  \label{prp:ababa}
	The out-of-order membership problem for $a^* b^* a^* b^* a^*$
        has $O(\log n)$ space complexity.
\end{propositionrep}

\begin{proofsketch}
  We keep track of the first and last $b$'s along with the number $n_a$ of $a$'s, the
  sum $p_a$ of the positions featuring $a$'s, and the sum $q_a$ of the square of positions
  featuring~$a$'s. At the end of the algorithm, we can fix $n_a, p_a, q_a$ to
  subtract all positions left of the first $b$ or right of the last $b$ -- we
  know that these positions must contain $a$'s. We let $n_a', p_a', q_a'$ be the
  resulting quantities. Then, we can check that the
  remaining $a$'s form a contiguous interval within the $b$'s by using their
  number $n_a'$ and their sum $p_a'$
  to find a candidate interval, and then verifying that the sum of squares
  $q_a'$ is
  correct. For this, we use the fact that placing $n_a'$ integers of range
  $p_a'$ as a contiguous interval, when possible, is the unique way to minimize
  the sum of squares $q_a'$ proved in \cref{sumof}.
\end{proofsketch}

\begin{toappendix}
  In preparation for the proof of \cref{prp:ababa}, we will need a lemma about
  how intervals can be uniquely identified from their cardinality and sum thanks
  to the fact that they minimize the sum of squares:
\begin{lemma}
	\label{sumof}
        Let $M > 0$,
consider the interval $I=\{p,p+1,\dots,p+n-1\}\subseteq\{1,\dots,M\}$ that
starts at~$p$ and contains $n$ elements.
Let $X\subseteq\{1,\dots,M\}$ be a set of $n$ elements (i.e., $|X|=n$) 
and ensuring $\sum_{x\in X}x=\sum_{y\in I}y$. We then have $\sum_{x\in X}x^2\ge\sum_{y\in I}y^2$, with equality iff $X=I$.
\end{lemma}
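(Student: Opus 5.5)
We prove \cref{sumof} by a variance-type argument: among all $n$-element sets of integers with a prescribed sum, the sum of squares is minimized exactly by the sets whose elements are as tightly packed as possible, and an interval is the unique such configuration when its sum matches. Concretely, let $\mu \coloneq \frac{1}{n}\sum_{y\in I} y = p + \frac{n-1}{2}$ be the common mean of $X$ and $I$. Since both sets have the same cardinality and the same sum, we have $\sum_{x\in X} x^2 - \sum_{y\in I} y^2 = \sum_{x\in X}(x-\mu)^2 - \sum_{y\in I}(y-\mu)^2$. It therefore suffices to compare the sums of squared deviations from $\mu$. The ambient bound $\{1,\dots,M\}$ plays no role beyond ensuring that everything is finite.

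The key step is a rearrangement or majorization comparison. Sort $X$ as $x_1<\cdots<x_n$. Because the $x_i$ are distinct integers, we have $x_j - x_i \ge j-i$ for all $i<j$. We aim to show that the multiset of distances $\{|x-\mu| : x\in X\}$ dominates termwise the multiset $\{|y-\mu| : y\in I\}$ once both are sorted increasingly. For the interval, the sorted distances are $d_k = |k - \frac{n+1}{2}|$, namely $0,1,1,2,2,\dots$ or $\frac12,\frac12,\frac32,\frac32,\dots$ depending on the parity of $n$. For an arbitrary set of $n$ distinct integers, the $k$-th smallest distance to any fixed real $\mu$ is at least $d_k$. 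This holds because the closed ball of radius $r$ around $\mu$ contains at most as many integers as the ball of the same radius around $\mu$ when $\mu$ lies in $\mathbb{Z}$ or in $\mathbb{Z}+\frac12$, with the appropriate parity. Summing the squares then gives the inequality.

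A cleaner alternative, which we would try first, is an exchange argument. If $X$ is not an interval, then there is a gap: some integer $g\notin X$ with $\min X < g < \max X$. We replace $\min X$ by $\min X+1$ and $\max X$ by $\max X-1$ whenever these are free. This preserves the sum and strictly decreases the sum of squares, since $(a+1)^2+(b-1)^2 = a^2+b^2 - 2(b-a-1) < a^2+b^2$ when $b-a>1$. When these neighbouring values are occupied, we instead move the element $x_i$ adjacent to a gap inward and compensate on the other side. The main obstacle is making this move well defined while keeping the elements distinct and the sum constant.

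In the end the distance-domination argument is the robust route. For equality, note that equality forces every sorted distance to equal $d_k$. This implies $X\subseteq[\mu-\frac{n-1}{2},\mu+\frac{n-1}{2}]$, and this range contains exactly $n$ integers, namely $I$, so $X = I$.
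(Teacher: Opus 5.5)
Your proof is correct via the distance-domination route, and it is genuinely different from the paper's. The paper works directly with the holes of $[x_1,x_n]$. It first rules out the one-hole case using the sum constraint. When there are at least two holes $h<h'$, it performs exactly the exchange you tried in your third paragraph, but with a choice that removes your obstacle: $u$ is the largest element of $X$ below $h$ and $v$ the smallest above $h'$. Then $u+1$ and $v-1$ are free and distinct, and $v-u\ge 3$ gives a strict decrease $2(u-v)+2\le -4$. Iterating until fewer than two holes remain forces the terminal set to be $I$.

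Your argument instead recenters at $\mu=p+\frac{n-1}{2}$, which is legitimate because cardinalities and sums agree. It reduces the lemma to the global fact that $I$ is exactly the set of $n$ integers nearest to $\mu$. This gives the inequality and the equality case in one stroke, with no case analysis and no termination argument. It even shows $\sum_{x\in X}(x-\mu)^2\ge\sum_{y\in I}(y-\mu)^2$ for every $n$-set $X$, with the sum hypothesis used only to translate back to raw squares. The paper's route stays with integer moves and exhibits an explicit improving sequence, at the cost of the hole bookkeeping.

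One correction to your write-up. You state the domination claim for ``any fixed real $\mu$'', and that is false. For $n=3$ and $\mu=\frac12$, the set $\{0,1,2\}$ has sorted distances $\frac12,\frac12,\frac32$, and $\frac12<d_2=1$. Your justifying sentence also ends up comparing a ball with itself. What you need, and what is true, is only the case $\mu=p+\frac{n-1}{2}$, which lies in $\mathbb{Z}$ or $\mathbb{Z}+\frac12$ according to the parity of $n$. There, the sorted distances of all integers to $\mu$ begin with $d_1,\dots,d_n$, attained exactly on $I$. The distances of $X$ form a sub-multiset of this list, so $e_k\ge d_k$.

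A more direct finish is also available. Every $x\in X\setminus I$ has $|x-\mu|\ge\frac{n+1}{2}>\frac{n-1}{2}\ge|y-\mu|$ for every $y\in I\setminus X$, and $|X\setminus I|=|I\setminus X|$. This gives the inequality and its strictness for $X\ne I$ at once.
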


\begin{proof}
  The backward direction of the equivalence is immediate, so we prove the
  forward direction.

Write $X=\{x_1<\dots<x_n\}$. We consider holes of the interval $[x_1,x_n]$,
i.e., the integers $x_1 < h < x_n$ which are not in~$X$), and distinguish
three cases depending on how many holes exist.

\emph{Case 1 (no holes).} If every integer $k$ with $x_1\le k\le x_n$ belongs to
$X$ then $X=\{a,a+1,\dots,a+n-1\}$ for some $a$; the hypothesis $\sum_{x\in
X}x=\sum_{y\in I}y$ then forces $a=p$, so $X=I$.

\emph{Case 2 (exactly one hole).} Suppose exactly one integer $x_1 < h <
x_n$ is not in~$X$. If $x_1\ge p$ then we have $x_i\ge p+i-1$ for all $i$, and
since we skip at least one value we must have $x_n>p+n-1$, hence $\sum_{i=1}^n
x_i>\sum_{y\in I}y$, contradiction. If $x_1<p$ then we have $x_i\le p+i-1$ for
all $i$ because we skip at most one value, so with $x_1 < p$ we obtain $\sum_{i=1}^n x_i<\sum_{y\in
I}y$, a contradiction. Thus a one-hole configuration is impossible. (Note this impossibility uses only the equality of the sums and so holds at every stage below, since our operations preserve the total sum.)

\emph{Case 3 (at least two holes).} Let $x_1 < h<h' < x_n$ be two missing
integers in $(x_1,x_n)$. It is easy to see that there must exist elements $u$ and $v$ of~$X$ which are respectively before $h$
and after $h'$, and are respectively immediately at the left and immediately at
the right of a hole. 
Formally, choose $u\in X$ maximal with $u<h$ and $v\in X$ minimal with $v>h'$,
and set $u'\coloneq u+1$, $v'\coloneq v-1$: by assumption we have $u' \notin X$ and $v' \notin X$.
Now, rewrite the set $X$ by defining $X'=(X\setminus\{u,v\})\cup\{u',v'\}$.
Then $X'$ has the same cardinality as $X$, and it has the same total sum because
one element was incremented and one element was decremented. 
Moreover, the sum of squares of $X'$ minus that of $X$ is equal to:
\[
(u')^2+(v')^2-(u^2+v^2)=(u+1)^2+(v-1)^2-(u^2+v^2)=2(u-v)+2.\]
Since $u\le h-1$ and $v\ge h'+1\ge h+2$ because we have at least two holes, we have $u-v\le-3$, so $2(u-v)+2\le-4<0$. Thus the sum of squares strictly decreases when passing from $X$ to $X'$.

Now iterate the operation of Case 3: each step preserves cardinality and total sum, and strictly decreases the integer $\sum_{x\in X}x^2$, so the process terminates after finitely many steps at a set $X^{(\ell)}$ to which Case 3 no longer applies. Now, by the remark after Case 2, Case 2 cannot occur at any intermediate stage (nor at the terminal stage) because the total sum equals that of $I$ throughout the rewriting. Hence the terminal stage must be Case 1, so $X^{(\ell)}=\{a,a+1,\dots,a+n-1\}$ and the total-sum equality forces $a=p$, therefore $X^{(\ell)}=I$.

Since every Case-3 step strictly decreased $\sum_{x\in X}x^2$ and the terminal set is $I$, we conclude $\sum_{x\in X}x^2\ge\sum_{y\in I}y^2$, with equality precisely when no Case-3 step is needed, i.e. when $X=I$. This completes the proof.
\end{proof}

With this tool in hand, we are now ready to prove \cref{prp:ababa}

\begin{proof}[Proof of \cref{prp:ababa}]
	All we need to check is that the $a$ which are between the first and
        last $b$ form a contiguous interval. For this, we store the following
        information in logarithmic space:
	\begin{itemize}
            \item $b_{\min}$: the smallest position containing a $b$
            \item $b_{\max}$: the biggest position containing a $b$
            \item $n_a$: the number of $a$'s
            \item $p_a$: sum of the positions where $a$ occurs
            \item $q_a$: the sum of the square of the positions where $a$
              occurs
	\end{itemize}
        At the end of the algorithm, we know that every position before
        $b_{\min}$, and after $b_{\max}$, must contain an~$a$. This means that:
        \begin{itemize}
          \item The total number of $a$ between $b_{\min}$ and $b_{\max}$ is:
            \[ n_a'
            \coloneq n_a - (b_{\min}-1) - (n-b_{\max})\]
          \item The sum of the positions where $a$ occurs and which are
            between $b_{\min}$ and $b_{\max}$ are
            \[
              p_a' \coloneq p_a - \left(\sum_{i=1}^{b_{\min}-1} i\right)
              - \left(\sum_{i=b_{\max}+1}^{n} i \right) 
            \]
          \item The sum of the squares of the positions where $a$ occurs and which are
            between $b_{\min}$ and $b_{\max}$ are
            \[
            q_a' \coloneq q_a - \left(\sum_{i=1}^{b_{\min}-1} i^2\right)
            - \left(\sum_{i=b_{\max}+1}^{n} i^2 \right) 
            \]
        \end{itemize}
        Thus, between $b_{\min}$ and $b_{\max}$ we have $n_a'$ occurrences of~$a$,
        whose sum is $p_a'$ and whose sum of squares is $q_a'$. From $p_a'$, we
        deduce a candidate interval $[l,r]$ of positions of~$a$, with 
        $r-l+1 = n_a'$ and with $\sum_{i=l}^r i = p_a'$. It is easy to see that
        there is at most one such interval; if no such interval gives the right
        sum, then we reject.

        Then, we control that the candidate interval is correct by checking that
	$\sum_{i=l}^r i^2 = q_a'$. Thus using \cref{sumof} we obtain that this
        equality is true precisely when the set of positions of~$a$ between
        $b_{\min}$ and $b_{\max}$ is exactly the candidate interval, i.e.,
        precisely when the innermost $a$'s are contiguous. So the algorithm can
        reply accordingly and it correctly reflects whether the word belongs to
        the target language.
%
%
\end{proof}
\end{toappendix}

For the same reasons as previously stated, this argument does not work for
monoids because of the neutral element. It appears that the algorithm could be
combined with that of \cref{prp:aba} to show a logarithmic space upper bound for
more complex languages, e.g., $a^*b^*a^*c^*a^*b^*a^*$. However it
does not appear easy to generalize it to a higher number of alternations between
the same symbol, e.g., $a^*b^*a^*b^*a^*b^*$.

\subparagraph*{An $O(\sqrt{n})$-space algorithm.}
We now conclude the section by showing that, for the language $a^*b^*a^*b^*a^*b^*$, we can
nevertheless achieve sublinear space complexity:


\begin{propositionrep}
  \label{prp:ababab}
	The out-of-order evaluation problem for $a^* b^* a^* b^* a^* b^*$ has $O(\sqrt{n})$ space complexity.
\end{propositionrep}

\begin{proofsketch}
	We split the word $w$ of length $n$ into $\lceil \sqrt{n} \rceil$ 
	blocks of size $b = \lceil \sqrt{n} \rceil$, each either \emph{fully
        memorized} (i.e., we have an explicit table of its contents) or summarized by the first letter that is seen in the block. 
	When reading letter $x$ at position $i$ in block $B$, we update $B$ as follows: if $B$ is fully memorized, 
	we write $x$ in its table; if $B$ is empty, it becomes an $x$-block; if $B$ is an $x$-block, do nothing; if 
	$B$ is a $y$-block with $y\neq x$, we fully memorize $B$ and write $x$
        in its table. Whenever we reach more than 
	6 fully memorized blocks, we \emph{reject early}, guaranteeing $w \notin L$ since $w$ contains a 
	scattered subword of the form $(ab|ba)^6$, which is impossible in $L$. At the end, we reconstruct a 
	word $w'$ by completing all blocks: the block contents are $x^b$ for
        non-fully-memorized $x$-blocks, and the table contents for fully 
	memorized blocks, completing the missing cells by the first letter seen
        in the block to compensate for the letters streamed before we started
        fully memorizing the block.
        We can show that $w' = w$ so that testing $w' \in L$ is correct. The
        space usage is 
	$O(\sqrt{n})$ since we store at most 6 full blocks plus block types.
\end{proofsketch}

\begin{proof}
  Let $L = a^* b^* a^* b^* a^* b^*$ be the target language.

	Knowing the length $n$ of the word~$w$,
        we split $w$ in $\lceil \sqrt{n} \rceil$
        blocks of size $b = \lceil \sqrt{n} \rceil$.
        Each block is either \emph{fully memorized} or not: for each fully
        memorized block we have a table of size $b$ storing its explicit
        contents. (We will make sure throughout the algorithm that there are at
        most 6 fully memorized blocks.)
        Further, for each block
        we remember the first letter (if any) that we have seen in the block,
        i.e., we have empty blocks, $a$-blocks, and $b$-blocks.

        When a letter $x \in \Sigma$ arrives at position~$i$, we compute the
        block in which the position $i$ falls, the offset $j$ of the letter in
        its block, and we do the following:
        \begin{itemize}
          \item If the $i$-th block is fully memorized, then we have a table
            $T_i$
            storing the explicit contents of the block, and we write down $T[j]
            \coloneq x$.
          \item Otherwise, if the $i$-th block is an empty block, then it becomes an $x$-block
          \item Otherwise, if the $i$-th block is an $x$-block, then we do
            nothing
          \item Otherwise, if the $i$-th block is a $y$-block
            with $y \neq x$, then the block becomes fully memorized, we
            initialize a table $T_i$ to store its explicit contents, and we
            write down $T[j] \coloneq x$.
          \item Last, if there are now more than 6 fully memorized block, then
            we \emph{reject early} and determine that $w \notin L$. (More accurately, we commit
            to rejecting at the end of the stream, disregard all other tuples,
            and reject at the end.)
        \end{itemize}

        At the end of the algorithm, we do the following. Note that no block can
        be empty. We define a word $w'$ obtained by filling all missing cells of
        fully memorized $x$-blocks for $x\in \Sigma$ by $x$, and reading the
        blocks in sequence, using the completed table for the contents of the
        fully memorized blocks, and reading the word $x^b$ for the
        non-fully-memorized $x$-blocks. Note that no blocks are empty at this
        stage. We then test $w'$ for membership to~$L$ and answer accordingly.
        (Note that we can in fact test whether $w' \in L$ in streaming without
        having to materialize $w'$, i.e., the working memory required at the end
        is no worse than $O(\sqrt{n})$.)

        It is clear that the algorithm is indeed in $O(\sqrt{n})$, because it
        stores the type of each block in $O(\sqrt{n})$ overall, and stores at
        most a constant number of fully memorized blocks, each taking space
        $O(\sqrt{n})$.

        We must now explain why the algorithm is correct. We first show that,
        when the algorithm rejects early, then indeed $w \notin L$. But early
        rejection ensures that the word $w$ is of the form $\Sigma^* B_1
        \cdots \Sigma^* B_6 \Sigma^*$, with $B_1, \ldots, B_6$ the fully
        materialized blocks; and for a block to become fully materialized we
        must have seen both an $a$ and a $b$ in the block. So $w$ must
        contain a scattered subword $z$ of the form $(ab|ba)^6$, and it can be
        checked that no word of~$L$ can contain $z$ as a scattered subword.
        This witnesses that the number of $a$-to-$b$ and $b$-to-$a$ transitions
        in~$w$ is
        at least 6, whereas the words of $L$ have precisely five such
        transitions.

        Second, we show that when the algorithm does not reject early then it returns
        the right answer. We show this by establishing that the word $w'$
        constructed at the end of the algorithm is in fact equal to~$w$. For
        this, we reason block by block, according to the block types:
        \begin{itemize}
          \item For $x$-blocks with $x \in \Sigma$, we know that we have only
            seen $x$ in such blocks, so indeed they consist of $x^b$
          \item For fully memorized $x$-blocks with $x\in \Sigma$,
            we have written down all letters in such blocks from 
            the moment where they were fully memorized. So indeed all missing
          positions correspond to letters that were streamed before the block
          was fully memorized. So these letters must be~$x$, otherwise the block
          would have become fully memorized earlier. Thus, our reconstruction of
          the contents of such blocks is correct.
        \end{itemize}

        This establishes that the algorithm is correct, and concludes the proof.
\end{proof}

This algorithm would generalize to more complex languages, e.g., $(a^* b^*)^k$
and $(a^* b^*)^k a^*$ for any constant~$k$. We do not know whether the space
complexity achieved by this algorithm is optimal: it would be very interesting
to show that such languages admit an $\Omega(\sqrt{n})$ lower bound, although at
this stage we cannot rule out the existence of an $O(\log n)$ algorithm (or
indeed other sublinear upper bounds below $O(\sqrt{n})$).

\section{Conclusion and Future Work}
\label{sec:conc}
We have introduced and studied the out-of-order membership problem for regular
languages, and the out-of-order evaluation problem for monoids and semigroups.
We have shown that all these problems can be solved with constant time per
streamed symbol and linear space, and have aimed to characterize the space
complexity depending on the fixed monoid, semigroup, or language. Our results
give a complete classification of the space complexity of the problem for
monoids, between constant space for commutative monoids, logarithmic space for
$\mathbf{FL} \lor \mathbf{Com}$ monoids, and linear space otherwise. These
tractability boundaries happen to coincide with those of simultaneous
deterministic communication complexity 
from~\cite{tesson}, though as far as we can see this appears to be coincidental.
Our results further classify all constant-space semigroups, with
logarithmic lower bounds for all other cases. We have given partial results for
the other complexity regimes, but the precise characterizations remain elusive.
Refer again to \cref{all_table,all_ex_table} for a summary of our
results and examples.

Our work opens many directions for future work. The obvious question is to
classify
the remaining cases: in particular finding a superlogarithmic lower
bound for some semigroup with a sublinear space complexity, characterizing those
semigroups to which an $O(\sqrt{n})$ lower bound applies, or characterizing the
$O(1)$ regime for languages.
We could study the randomized complexity and
the non-deterministic complexity of our problems (see also~\cite{ada2010non}).
It would also be interesting to understand how the complexity changes for
problem variants, e.g.,
when some positions are not streamed and
implicitly filled with a default letter, or
when we are able to stream the same position multiple
times (with the same contents). This latter variant would invalidate some
algorithms (e.g., that of \cref{prp:aba})
so we believe it makes a difference in terms of complexity.
We could also study the setting where the word length is not known in advance
(though this would make the problem less symmetrical), or settings where the
positions are labeled with integers that are not necessarily positive, or
consecutive, or indeed with rational numbers.

We have also not discussed the complexity of determining the space complexity regime
when we are given a language as input (e.g., as an automaton); or the complexity of our algorithms when the
monoid or semigroup or language is not fixed but is also part of the input.
Another question, pertaining to the connection to property
testing~\cite{alon2001regular,bathie2025trichotomy}, would be to ask how the
space complexity is changed if we must only distinguish between words belonging
to the target language~$L$ and words that are sufficiently far from~$L$; we could also study strengthenings like computing the distance of the
word to~$L$.

Another interesting problem variant would be the \emph{earliest answer} setting,
where we would require the algorithm to give its answer about membership to~$L$
not at the end of the stream but as soon as it becomes certain, i.e., as soon as we reach
a partial word for which all completions belong to~$L$ or all completions do not
belong to~$L$. For some classes (e.g., commutative languages) it does not seem
difficult to enforce this, but we have not investigated whether this requirement would
increase the complexity of the problem in general.

One last question is of course about the complexity of out-of-order membership
for more general languages than regular languages, e.g., for context-free
languages we may not be able to have even an algorithm working in constant time
per streamed letter because it would in particular give a linear-time parsing
algorithm for arbitrary grammars (already in the special case where the letters
are streamed from left to right).

\bibliography{main}

\appendix

\end{document}